\newtheorem{lemma}{Lemma}
\newtheorem{definition}{Definition}
\newtheorem{theorem}{Theorem}
\newtheorem{corollary}{Corollary}
\newtheorem{proposition}{Proposition}
\newtheorem{property}{Property}
\newtheorem{obs}{Obsevation}
\date{}
\title{SPAIDS and OAMS Models in Wireless Ad Hoc Networks}
\author{Aikaterini Nikolidaki \footnote{School of Electrical and Computer Engineering, National Technical University of Athens, Email: aiknikol@mail.ntua.gr}}
\begin{document}
\SIGACTmaketitle


\begin{abstract}
	In this paper, we present two randomized distributed algorithms in wireless ad hoc networks. We consider that the network is structured into pairs of nodes (sender, receiver) in a decay space. We take into account the following: Each node has its own power assignment and the distance between them does not follow the symmetry property. Then, we consider a non-uniform network or a realistic wireless network, which is beyond the geometry. Our model is based on the Signal to Interference plus Noise Ratio (SINR) model. In this work, the main problem is to solve the scheduling task aiming the successful transmission of messages in a realistic environment. Therefore, we propose the first randomized scheduling and power selection algorithm in a decay space and is called as SPAIDS. In order to solve the problem in this non-uniform network, we introduce a new way to study the affectance (the interference) among the links, which is defined as Weighted Average Affectance (WAFF). Moreover, we study the online broadcast problem in a metric space, in which the nodes are activated in case that they receive packets. We propose an online algorithm in a metric space which is denoted as OAMS. Our aim is to obtain the maximum subset of nodes that receive the message from a sender node with enough energy supplies. Finally, we compare the performance of OAMS to the optimal.
\end{abstract}

\section{Introduction}
In wireless networks, a great challenge is the management of simultaneous transmissions among nodes in an environment which is characterized by real conditions. We are concentrated on the \emph{scheduling problem}, where the nodes are located in an \emph{arbitrary decay space}. In this space, the transmission signal may be reduced by the interference of other communication links, the obstacles, the reflections and the shadowing. Thus, we consider two conditions: Firstly, the distances between nodes are not symmetrical. Secondly, each of the nodes has not the same transmission power. These two conditions add a greater degree of difficulty in our study. Therefore, our aim is to seek the fewest number of different time slots needed to schedule all the communication links in such a network. The key point of our study is to ensure the successful transmission of messages in a decay space. Then, we use acknowledgement messages and determine guards to protect the transmissions providing quality of services. 

Also in this paper, we are focused on the broadcast problem, where a sender node transmits messages to all the nodes in the network when they follow the symmetry property. In particular, we present an online mechanism in a metric space that each receiver node is activated to get a message from its sender. We consider that each mobile user has a \emph{limited battery capacity} or equivalently \emph{battery-feasibility}. In \cite{badanidiyuru2012learning} and \cite{singer2010budget}, the authors study online mechanism and mainly under budget of an user (agent) in an online procurement market.

We adopt the \emph{Signal-to-Interferences-and-Noise-Ratio} (SINR) \emph{physical model} which is based on the physical assumptions that the strength of signals reduces gradually because of the cumulative interference of other communication links. The SINR model recently acquires the attracted study of algorithmic community. Moscibroda and Wattenhofer \cite{moscibroda2006complexity} initiated a scheduling algorithm in the SINR model in which a set of links is successfully scheduled into polylogarithmic number of slots. In \cite{kesselheim2010distributed}, randomized distributed algorithms were proposed for the scheduling problem, where a transmission probability is used, as a parameter which works for short schedules. Jurdzinski et al. \cite{Jurdzinski:2014} presented a randomized algorithm, in which all nodes start the algorithm at the same time, and a randomized algorithm, in which the source node is only actived during the initiation phase of the algorithm. They studied this problem in an uniform network using a communication graph in a metric space with a distance function at most 1. Bodlaender and Halld\'{o}rsson \cite{bodlaender2014beyond} used an abstract SINR model in order to solve the capacity problem with uniform power in a decay space. In \cite{yu2016distributed}, the authors presented a randomized multiple-message broadcast protocol.
\subsection{Additional Related Work}
The study of scheduling (and capacity) problem in the SINR model through algorithmic analysis using oblivious power schemes presented recently in the literature. In these schemes, the power chosen for a link depends only on the link length itself and these can be categorized into three cases: 1) the uniform power; 2) the linear power; and 3) the mean power scheme. The first $O(\log\log\Delta)$-approximation algorithm for oblivious power schemes is presented in \cite{HalldorssonT15,HT} for the \emph{wireless scheduling problem} and the \emph{weighted capacity problem}, where $\Delta$ is the ratio of the maximum and minimum link lengths. The result is achieved by the representing of interference by a conflict graph. However, the unweighted Capacity problem admits constant-factor approximation according to \cite{Kesselheim:2011:CAW:2133036.2133156}. The WCapacity problem admits $O(\log^{*} \Delta)$-approximation according to \cite{HT,HalldorssonT15}. The result for the scheduling problem is $O(\log n)$-approximation. In the case of the grouping of link lengths, the $O(\log \Delta)$-approximation, is according to \cite{fu2009power,Goussevskaia:2007:CGS:1288107.1288122,Halldorsson12}. A $O(\log^{*} \Delta)$-approximation for the scheduling problem is presented at \cite{HalldorssonT15,HT}, which is the best bound.

Note that in \cite{APX}, Halld\'{o}rsson and Wattenhofer prove that the wireless scheduling problem is in APX. More, in \cite{Halldorsson12}, the author present an approximation algorithm for the wireless scheduling problem with ratio $O(\log n \cdot \log\log \Delta)$. These results hold also for the weighted capacity problem. In \cite{halldorsson2013power}, an algorithm for the capacity problem that achieves $O(\log\log\Delta)-$ is proposed. In addition, T. Tonoyan \cite{TT} prove that a maximum feasible subset under mean power scheme is always within a constant factor of subsets feasible under linear or uniform power scheme for the capacity problem.
\subsection{Contribution}
Following \cite{Jurdzinski:2014}, we study the scheduling problem in a more general space, in an arbitrary decay space. This means that the strength of a transmitted signal of any sender node is vulnerable because of interference of other nodes, obstacles, reflections and shadowing. Therefore, there is a reduction of the strength signal and the receiver may not get the message from its sender. In this paper, we achieve the ideal solution of the scheduling problem using power control through two different efficient algorithms. The main contributions are summarized:

Firstly, we propose the first randomized distributed algorithm in order to control the power of each node and to solve the minimum scheduling problem in a non-uniform network. The algorithm is based on the coloring method in \cite{Jurdzinski:2014}, which assigns probability/color to each node taking part in an implementation. We propose a scheduling and power selection algorithm in a decay space, which is called as SPAIDS. Therefore, we propose an $O(\log^{*}\Delta \log n)$ randomized algorithm, where $n$ is the number of nodes and  $\Delta $ is the ratio between the maximum and the minimum power assignment. More details:
\begin{itemize}
	\item We determine a set of probability transmissions to each node in order to achieve transmissions of the messages in the network and we separate it into $K$ subsets. The SPAIDS algorithm needs a $O(\log^{*}\Delta)$ time in order to assign colors in the nodes because of the separation into feasible subsets. 
	\item We consider that the message is successfully received when the sender receives an acknowledgement message from its receiver because the nodes are located in a decay space.
	\item We use guards in order to protect the receiver from interference of other links and to boost the signal. Also, we protect the sending of an acknowledgement message from a receiver. Then, we guarantee the successful transmission and the quality of service.	
\end{itemize}

Secondly, we propose online algorithm in a metric space, which is called as OAMS, in order to control the power of each node and to achieve the deliver of messages to all nodes in the space. We focus on the online broadcast problem that each receiver node is activated at each time step. Each mobile user has a battery, who can store power at most $C_{B}$. Also, we assume that there is unknown distribution of nodes in our network. The algorithm assigns probability/color to each node taking part in an implementation. The proposed algorithm is constant-competitive.

\subsection{Paper Organization}
The rest of this paper is organized as follows: Section 2 describes the system model used in this work and gives some useful definitions. Section 3 presents a conflict graph and its properties in a decay space as well as upper bounds. Section 4 presents the scheduling and power selection algorithm (SPAIDS). Section 5 presents the online algorithm in a metric space (OAMS). 
\section{System Model and Definitions}
In this section, we describe the proposed model of wireless ad-hoc networks, which consists of pairs of nodes. A pair of nodes is denoted as a quasi-link $q_{i}=(s_{i}, r_{i})$, where $s_{i}$ is the sender and $r_{i}$ is the receiver of quasi-link $i$. We consider that quasi-links are the communication links in decay spaces (Section \ref{mdc}).  The model is characterized by the following components: SINR formula, decay signal among nodes using quasi-metrics and bounded growth properties. We study the case that the power transmission is non-uniform in all the nodes as well as the case that the distances among nodes are not symmetrical. Thus, our proposed model is characterized as a realistic model. Moreover, we introduce a new notion of affectance, the Weighted Average Affectance (WAFF).
\subsection{System Model}
We consider that a wireless network can be represented as a graph $G=(V,E)$, where $V$ is the set of nodes and $E$ is the set of edges (or quasi-links). Each directed edge $q_{i}$ is denoted as a communication request from a sender $s_{i}$ to a receiver $r_{i}$ in decay spaces. We consider $S = \{q_{1},...,q_{n}\}$ is a set of quasi-links. Each sender $i$ transmits packets to its receiver at power $P_{i}$ multiplied by the gain $G_{ij}$. The gain represents the distance between sender and receiver, which is denoted as $G_{ij}=1/d_{ij}^{a}$, where $a\in (2,6)$ is the path-loss exponent and $d_{ij}^{a}=d(s_{i}, r_{j})^{a}=q(s_{i}, r_{j})$ is the quasi-distance (in Section \ref{mdc}) among two nodes $i$ and $j$.

In our model, we use the SINR interference model and assume that $v_{i}$ is the noise (constant) at the receiver $i$ and $\beta_{i}$ is a threshold of SINR. The signal transmission can be successful if and only if $SINR\geq \beta_{i}$ for all the senders $i$, where $SINR_{i} = \dfrac{P_{i}/l_{i}^{a}}{\sum_{i\neq j} P_{j}/d_{ji}^{a} + v_{i}}\geq \beta_{i}$. 
\subsection{Metric and Decay Spaces}\label{mdc}
In \cite{HT,HalldorssonT15,Jurdzinski:2014}, the nodes of network are embedded in a general metric space. 
A metric space consists of an ordered pair of $(\mathcal{V},d)$, where $\mathcal{V}$ is a set of nodes and $d: \mathcal{V}\times \mathcal{V}\rightarrow \mathbb{R}_{+}$ is a distance function. $d$ is defined as a metric such that for any $u,v,w \in \mathcal{U}$, the following holds: (i) symmetry property, (ii) triangle property and (iii) non-negativity property \cite{Gupta}.

On the other hand, a real network has not the symmetry property. In \cite{bodlaender2014beyond}, the authors study their network in a metric space when there is not the symmetry property. This metric space is defined as a decay space or else quasi-metric. A quasi-metric on a set $V$ is defined as a function $q:V\times V \rightarrow \mathbb{R}_{\geq 0}$ such that for all $v,u,w\in V$: (i) $q(u,v)\geq 0$, (ii) $q(u,v)=q(v,u)=0 \Leftrightarrow u=v$ and (iii) $q(u,v)\leq q(u,w) + q(w,v)$ \cite{romaguera2000semi}. We denote the quasi-distance of two nodes $i,j$: $q(s_{i},r_{j})=d(s_{i},r_{j})^{a}$, where $a\in (2,6)$. Each quasi-link $i$ is $q_{i}=q(s_{i},r_{i})=d(s_{i},r_{i})^{a}$.

Moreover, we bound the arbitrary growth of space. The bounded growth decay space consists of two properties: (i) Doubling Dimension. This property is the infimum of all numbers $\delta >0 $ such that every ball of radius $r>0$ has at most $C \epsilon^{-\delta}$ points of mutual distance at least $\epsilon r$ where $C\geq 1$ is an absolute constant $\delta > 0$ and $0<\epsilon\leq 1$. Metrics with finite doubling dimensions are said to be doubling. (ii) Independent Dimension. In the decay spaces, the concept of independence-dimension $D$ is applied in \cite{bodlaender2014beyond,goussevskaia2009capacity} and is defined as follows: Let $\left( V,q\right) $ be a metric space and $v\in V$. A set $I\subseteq V\setminus{\{v\}}$ is called independent with respect to $v$ if $B(w, q(v,w))\cap I = \{w\}$ for all $w \in I$. The size of the largest independent point set is called the independent-dimension of $(V,q)$ and denoted by $D$.	
\subsection{Power Conditions}
In addition, we give two conditions for the power assignments: (i) $P_{v}\geq c\beta N q_{v}$ for some constant $c>1$. (ii) If $q_{v}^{-1}\leq q_{w}^{-1}$ then $P_{v}\leq P_{w}$ and $P_{v}\cdot q_{v}^{-1}\leq P_{w} \cdot q_{w}^{-1}$, that large quasi-link in a decay space has small power assignment. While small quasi-link has better power condition in order to transmit a message to the receiver in a decay space.
\subsection{Affectance}
In this part, we introduce a new notion of affectance. It is defined as a "Weighted Average Affectance (WAFF)" and depends on the quasi-link lengths, the power assignments and the density bounding properties. Note that the measure of affectance is introduced by \cite{APX} and recently reused by \cite{HT}. In \cite{HT,HalldorssonT15,APX}, the authors study only the quasi-link lengths. 

We study the model in decay spaces when the distances between of nodes are not symmetrical. Thus, we have $d(x,y)\neq d(y,x)$. Also, the nodes have not the same power assignment. The next definition means that if each quasi-link $q_{j}$ has weighted affectance on a quasi-link $q_{i}$ then we have the weighted average affectance $WAFF(S,i)$ of a set of quasi-links $S$ on a quasi-link $q_{i}$, where the nodes are located in a decay metric space. In this paper, the weight of a node $i$ is the transmitted probability $p_{i}$ of node $i$. For simplicity we redefined the $WAFF(S,i)$ as $Da_{p}(S,i)$. Then, we have:
\begin{definition}\label{ddap}
	Let $S$ be a set of quasi-links. We consider a quasi-link $i\notin S$. The "Weighted Average Affectance" of $S$ on $i$ in a decay space is defined as follows:
	\begin{equation}
	WAFF_{p}(S,i)=Da_{p}(S,i)=\dfrac{\sum_{j\in S} p_{j}\cdot a_{P}(j,i)}{\sum_{j \in S} p_{j}},
	\end{equation}
	where $a_{p}(j,i) = \max \left\lbrace \dfrac{R\cdot q_{i}}{q(s_{j}, r_{i})}, \dfrac{R\cdot q_{i}}{q(r_{j}, r_{i})}, \dfrac{R\cdot q_{i}}{q(s_{j}, s_{i})}, \dfrac{R\cdot q_{i}}{q(r_{j}, s_{i})} \right\rbrace$ is the affectance of quasi-link $j$ on quasi-link $i$ using power assignments $P_{i}$ and $P_{j}$ in a decay space, respectively. The decay distance is $q(\cdot,\cdot)$ and $R=P_{j}/P_{i}$.	 	
\end{definition} 
\section{Conflict Graph in Decay Space}
In this section, we study conflict graphs and their properties in non-uniform wireless networks. Note that conflict graphs are graphs defined over a set of links (quasi-links in decay spaces). Our interest is situated in the case of the non-symmetry property and the non-uniform power assignment. Useful definition is the independence of quasi-links, as it determines the less distance of quasi-links when they are not in conflict. Let  $q_{i}=(s_{i},r_{i})\notin S$ be the quasi-link in which $s_{i}$ sends a message to $r_{i}$. Our goal is to seek an upper bound of the WAFF of a set $S$ of quasi-links on the given quasi-link $q_{i}\notin S$. 

In this paper, our study is based on non-unit balls (of radius $r\neq 1$) because of quasi-links. We divide the set $S$ in annuli disks (or n-spheres in distance $D\geq 3$) centered at the $r_{i}$ (or $s_{i}$) of quasi-link $i$. In Figure \ref{fig:disks}, the concentric disks surrounded around the endpoint of quasi-link $i$ are represented. The set $S$ consists of equilength subsets $S_{k} \subseteq S$. Let $B_{k}(r_{i},d_{k})$ be the annulus disk with center $r_{i}$ and radius $d_{k}$ for each $k$ disk. Each $S_{k}$ has a number of active nodes, which can cause an affectance on the quasi-link $i$. Each $k$ disk has a number of annuli disks $B_{k\lambda}(v_{k\lambda},\rho)$ and $B_{k\lambda}(v_{k\lambda},a\rho)$, where $\rho$ is the radius at the small disk and $a\rho$ at the large disk. The large disk $B_{k\lambda}(v_{k\lambda},a\rho)$ is at most a factor of the radius $d_{k}$. 
\begin{figure}
	\centering
	\includegraphics[width=2in]{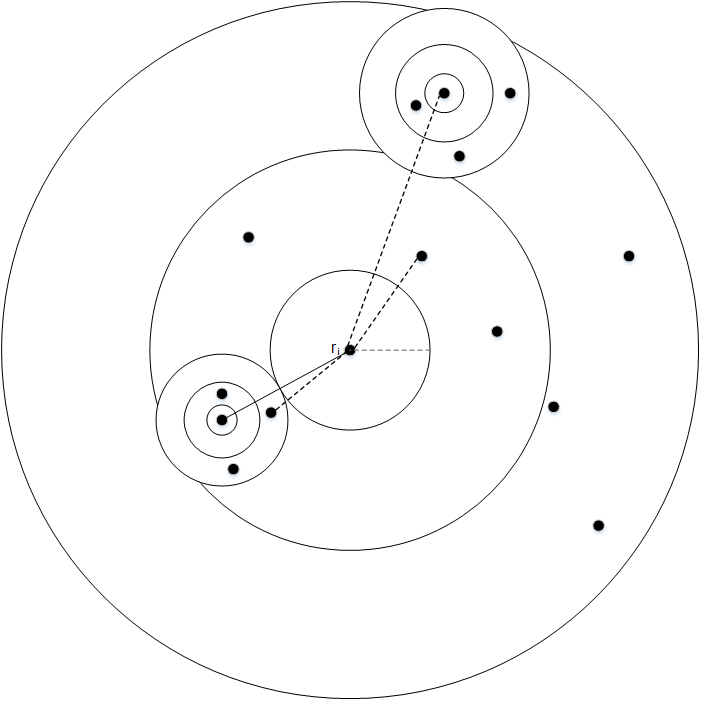}
	\caption{Concentric disks surrounded around the endpoint of quasi-link $i$.}
	\label{fig:disks}
\end{figure}
In the following part, we introduce the concepts of DP-feasible and acknowledgment messages. Also, we seek lower and upper bounds. 
\subsection{Feasibility}
A set $S$ of quasi-links is called as \emph{DP-feasible} if the $SINR$ holds for each quasi-link $i \in S$ in case that we use a power assignment $P$. The set $S$ is \emph{feasible} if there exists a power assignment $P$ for which $S$ is \emph{DP-feasible}. Thus, a set S of quasi-links is feasible if and only if the average weighted affectance satisfies: $Da_{p}(S,i) \leq 1/C_{2}\beta$. A set of quasi-links is called $\beta$\emph{-DP-feasible} if it is \emph{DP-feasible}. The Proposition \ref{propfeas} gives the feasibility of WAFF.
\begin{proposition}\label{propfeas}
	We assume $C_{1}$ and $C_{2}$ are the upper and the lower bound of the sum of probabilities of the transmitted nodes, correspondingly.	A set of quasi-links S is \emph{DP-feasible} if and only if $Da_{p}(S,i)\leq 1/\left( \beta\cdot C_{2}\right)$.
\end{proposition}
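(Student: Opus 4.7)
The plan is to unpack DP-feasibility back to its SINR definition and then match it termwise against $Da_p(S,i)$. Fixing a quasi-link $q_i\in S$, I would first rewrite $SINR_i\ge\beta_i$ by moving the noise $v_i$ to the right-hand side and dividing through by $P_i/q_i$, producing an interference sum of terms of the shape $\tfrac{P_j}{P_i}\cdot\tfrac{q_i}{d_{ji}^{a}}$. Each of these is, by Definition~\ref{ddap}, upper bounded by $a_p(j,i)$ (the $\max$ over the four endpoint-to-endpoint ratios), so DP-feasibility reduces to an inequality of the form $\sum_{j\in S\setminus\{i\}} a_p(j,i)\le 1/\beta$, once the noise contribution is absorbed into the threshold using the first power condition $P_v\ge c\beta N q_v$ from Section~2.3.

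The next step is to bring in the transmission probabilities $p_j$ as weights. Multiplying the $j$-th interference contribution by $p_j$ and summing converts the feasibility inequality into $\sum_{j\in S} p_j\,a_p(j,i)\le(\sum_j p_j)/\beta$, and dividing through by $\sum_j p_j$ produces precisely $Da_p(S,i)$. For the $(\Rightarrow)$ direction I would then invoke the lower bound $\sum_j p_j\ge C_2$ to conclude
\begin{equation*}
Da_p(S,i) \;=\; \frac{\sum_{j\in S} p_j\,a_p(j,i)}{\sum_{j\in S} p_j} \;\le\; \frac{1}{\beta\,C_2}.
\end{equation*}
For the $(\Leftarrow)$ direction I would multiply the hypothesis $Da_p(S,i)\le 1/(\beta C_2)$ by the upper bound $\sum_j p_j\le C_1$, then apply the second power condition and the endpoint $\max$ in $a_p(j,i)$ to upper bound the actual per-transmission interference at $r_i$, and read off $SINR_i\ge\beta_i$ on the nose.

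The hard part will be making the reduction genuinely quantitative rather than order-of-magnitude, because $a_p(j,i)$ is defined as the maximum of four endpoint-to-endpoint ratios and is therefore a loose envelope for the true interference contribution of $j$. The two directions of the equivalence are tight only up to the multiplicative gap $C_1/C_2$, and cleanly absorbing this gap together with the noise $v_i$ back into the threshold $\beta$ is where I expect the bookkeeping to be delicate. To keep the total interference controlled while using this envelope, I would lean on the doubling-dimension and independence-dimension properties of the decay space, in the spirit of the annulus decomposition sketched for Figure~\ref{fig:disks}, so that only a bounded number of active quasi-links contribute at each distance scale from $r_i$, ensuring the per-link SINR statement survives the aggregation.
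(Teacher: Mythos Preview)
Your forward direction is essentially the paper's argument: the paper also reduces DP-feasibility to the classical affectance bound $\sum_{j}a_P(j,i)\le 1/\beta$ (which it imports by citation to \cite{HT} rather than by unpacking SINR as you do), then divides the weighted sum by $\sum_j p_j\ge C_2$ to reach $Da_p(S,i)\le 1/(\beta C_2)$. The one structural difference is where the geometry enters: the paper uses the annulus decomposition \emph{upstream}, invoking the coloring argument of \cite{Jurdzinski:2014} over the concentric balls $B'=\bigcup_{k>1}\bigl(B_k\setminus B_{k-1}\bigr)$ to justify that $\sum_{j:p_j=p,\,j\in B'}p_j\ge C_2$ holds whp, whereas you take $C_2$ as a standing hypothesis and reserve the doubling/independence-dimension machinery for the reverse implication. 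Your route is a bit more self-contained; the paper's is shorter because it outsources both the $1/\beta$ bound and the $C_2$ bound to prior work.

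Two small points. First, the paper's proof in fact only writes out the ``only if'' direction; your sketch of the converse via $\sum_j p_j\le C_1$ and the endpoint-$\max$ envelope is additional content, and your caution about the $C_1/C_2$ slack is well placed, since the equivalence as stated cannot be tight in both directions with the given constants. Second, your intermediate inequality $\sum_j p_j\,a_p(j,i)\le(\sum_j p_j)/\beta$ does not follow literally from $\sum_j a_p(j,i)\le 1/\beta$; the clean step is $\sum_j p_j\,a_p(j,i)\le \sum_j a_p(j,i)\le 1/\beta$ (using $p_j\le 1$), then divide by $\sum_j p_j\ge C_2$. The paper's displayed chain has the same harmless slip in its middle term.
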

\begin{proof}
Initially, we use the WAFF and the fact that the affectance of a set of links $S$ on the quasi-link $i$ is upper bounded by the value $1/\beta$ according to \cite{HT}. In \cite{Jurdzinski:2014}, the authors use unit balls in uniform networks. In decay spaces, we consider the following: Let disks $k>1$. For each node $j$, there is a color $p$ such that the sum of probabilities of  this color, in ball $B' \equiv \bigcup_{k>1}\left(  B_{k}(i,Rq\gamma_{1}+\frac{k-1}{2}\cdot q)\setminus B_{k-1}(i,Rq\gamma_{1}+\frac{k-2}{2}\cdot q)\right)$, is at least $C_{2}$ whp: $\sum_{\substack{j: p_{j}=p \\	j\in B'}} p_{j} \geq C_{2}$, where $R$ is the ratio of power assignment $(\frac{P_{j}}{P_{i}})$, $\gamma_{1}>0$ is a constant; and $q$ is the decay distance. 

Then, $Da_{p}(S,i) =\dfrac{\sum_{j\in S} p_{j}\cdot a_{P}(j,i)}{\sum_{j \in S} p_{j}} \leq \dfrac{\sum_{j\in S} p_{j}}{\beta C_{2}} \leq \dfrac{1}{\beta C_{2}}$.
\end{proof}

\begin{proposition}\label{mp}
	Two quasi-links $i$,$j$ in a decay space with $q_{j}\geq q_{i}$ are $\gamma_{1}$-independent iff $q(j,i)> \gamma_{1} q_{i}$ and are $(R, \gamma_{1})$-independent iff $q(j,i)> \frac{1}{2}\cdot R \gamma_{1} q_{i}$ or $q(j,i) q(i,j)> \frac{1}{4}\cdot \gamma_{1}^{2} q_{i} q_{j}$ with probability of transmitting $p_{j}\geq 1/2$.  	
\end{proposition}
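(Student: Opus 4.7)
The plan is to unpack both notions of independence by combining the definition of affectance (Definition \ref{ddap}) with the quasi-metric structure from Section \ref{mdc} and the power condition (ii) of Section 2.3. First I would fix notation: for the pair $\{i,j\}$ let $q_{\min}(j,i) := \min\{q(s_j,r_i), q(r_j,r_i), q(s_j,s_i), q(r_j,s_i)\}$, so that $a_p(j,i) = Rq_i/q_{\min}(j,i)$, since the denominator attaining the $\max$ of the four ratios is the smallest cross-distance. The shorthand $q(j,i)$ in the statement then denotes $q_{\min}(j,i)$, which puts both claims into the common form ``the minimum cross-distance is above a threshold.''

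For the first claim, $\gamma_1$-independence of $\{i,j\}$ is, by the notion recalled in Section \ref{mdc} and used in \cite{bodlaender2014beyond}, the statement that no endpoint of $j$ lies inside the ball of radius $\gamma_1 q_i$ centered at any endpoint of $i$; this is precisely $q_{\min}(j,i)>\gamma_1 q_i$, giving the iff directly.

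For the $(R,\gamma_1)$-version I would first invoke power condition (ii), which under $q_j \geq q_i$ gives $R q_i \leq q_j$, so that the threshold for non-affectance lifts from $\gamma_1 q_i$ to roughly $R\gamma_1 q_i$. I would then split on whether $q(j,i) > \tfrac{1}{2} R\gamma_1 q_i$ (first disjunct) or not; in the latter case the single-direction distance is not large enough, so I would extract strength from the opposite-direction distance $q(i,j)$. Using the quasi-triangle inequality to couple $q(j,i)$ and $q(i,j)$ through their product, together with the fact that the contribution of $j$ to the WAFF is weighted by $p_j\geq 1/2$ (halving the effective threshold in Proposition \ref{propfeas}), the alternative certificate $q(j,i)\,q(i,j) > \tfrac{1}{4}\gamma_1^2\, q_i q_j$ emerges. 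The converse in each disjunct is then a direct substitution back into the affectance formula.

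The main obstacle will be pinning down the exact constants $\tfrac{1}{2}$ and $\tfrac{1}{4}$: the factor $\tfrac{1}{2}$ must come from the probability threshold $p_j \geq 1/2$ combined with the case split, while $\tfrac{1}{4}=(\tfrac{1}{2})^2$ reflects that in the product case both directions absorb one factor of $\tfrac{1}{2}$. Because the quasi-metric is asymmetric, each disjunct has to be verified independently, and one cannot reduce the analysis to a single symmetric bound as in the metric case; this asymmetry is what forces the disjunctive form of the statement rather than a single inequality.
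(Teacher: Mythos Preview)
Your plan takes a substantially more elaborate route than the paper, and in a couple of places it rests on steps that do not obviously go through.

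The paper's argument is essentially a one-line algebraic unpacking of Proposition~\ref{propfeas}. From DP-feasibility one has $a_p(j,i) = (P_j/P_i)\,q_i/q(j,i) < 1/(\beta C_2)$, which rearranges to $q(j,i) > (P_j/P_i)\,q_i\,\beta C_2$. The paper then \emph{sets} $\gamma_1 := C_2\beta$, so that $q(j,i) > R\gamma_1 q_i$, and hence trivially $q(j,i) > \tfrac12 R\gamma_1 q_i$. The product inequality $q(j,i)\,q(i,j) > \tfrac14\gamma_1^2 q_i q_j$ is then read off by multiplying the two one-sided bounds. No case split, no quasi-triangle inequality, and no appeal to power condition~(ii) appears.

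Against this, your plan (a) treats $\gamma_1$-independence as having a prior ball-based definition that the proposition merely restates --- but in the paper this proposition \emph{is} where the notion is characterized, so your first ``iff'' risks circularity; (b) attributes the factor $\tfrac12$ to the weight $p_j\geq 1/2$ entering the WAFF numerator, whereas in the paper's derivation the $\tfrac12$ is simply a weakening of the already-established $q(j,i)>R\gamma_1 q_i$; and (c) proposes to obtain the product bound by ``coupling $q(j,i)$ and $q(i,j)$ through their product via the quasi-triangle inequality,'' which is not a standard consequence of the (additive) triangle inequality and is left unspecified in your outline. The actual mechanism is the direct multiplication of the two feasibility-derived one-sided bounds, not a triangle-inequality argument. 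Your invocation of power condition~(ii) is also superfluous here: the bound $q(j,i)>R\gamma_1 q_i$ already carries the $R$ factor from the affectance formula itself.
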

\begin{proof}
	Let S be a set of links. We consider power $P_{i}$ for each link i. A set S is DP-feasible iff $Da_{p}(S,i)= \sum\limits_{j\in S}Da_{p}(j,i)<1/\left( \beta\cdot C_{2}\right) $. From the feasibility of S holds the next inequality: $q(j,i)> \frac{P_{j}}{P_{i}}\cdot q_{i}\cdot C_{2}\cdot \beta $ and therefore the quasi-distance links $i,j$, where we denote $\gamma_{1}= C_{2}\cdot \beta$, is defined by $q(j,i)> \frac{P_{j}}{P_{i}} \cdot q_{i}\cdot \gamma_{1}$ or $q(j,i)>\frac{1}{2} \cdot R \cdot q_{i}\cdot \gamma_{1}$. Then, $q(j,i) q(i,j)> \frac{1}{4}\cdot \gamma_{1}^{2} q_{i} q_{j}$.
\end{proof}

In general decay space, we introduce a lower bound in the WAFF for each node $k\in B(j,q_{ji})$, as follows: A set of quasi-links S is \emph{GDP-feasible} if and only if $Da_{p}(S,i)\geq \delta_{1}/ C_{DI}$, where $0<\delta_{1}<1$ and $C_{DI}$ is an upper bound of the sum of probabilities of the transmitted nodes in independent dimension D.
\subsection{Acknowledgements}\label{acknow1}
In this part, we study the reverse case that a receiver $r_{i}$ transmits an acknowledgement message to its sender $s_{i}$ in order to be known that $r_{i}$ successfully received the message. The transmission of an acknowledgement message needs a proper power assignment. We take into account that the nodes are located in a decay space. We need to study the affectance of quasi-link $q_{j}=l_{j}^{a}$ by quasi-link $q_{i}=l_{i}^{a}$ in the case of acknowledgement transmissions, where $l_{i}$ and $l_{j}$ are links in uniform networks. Also, we compare the affectance of acknowledgments with the standard $Da_{p}$ affectance. 
\begin{definition}\label{defp} Let a quasi-link $q_{i}=(s_{i},r_{i})$ and its reverse quasi-link $q_{i}^{*}=(r_{i},s_{i})$ for the acknowledgement transmission. We consider that a quasi-link $i$ has a power assignment $P_{i}>0$. However, the acknowledgement transmission needs a power $P_{i}^{*}>0$ in order to arrive at the sender. We consider that the power $P_{i}^{*}$ is defined by $	P_{i}^{*}=\frac{P_{i}\cdot q_{i}^{*}\cdot q_{ji}}{q_{i}\cdot q_{ji^{*}}}=\frac{P_{i}\cdot {l_{i}^{*}}^{a}\cdot d_{ji}^{a}}{l_{i}^{a}\cdot d_{ji^{*}}^{a}}$.
\end{definition}
\begin{lemma}\label{acknowled}
	For all quasi-links of a set $S$, it holds that $a_{P^{*}}(q^{*}_{i},q^{*}_{j}) = O(a_{P}(q_{i},q_{j}))$ when the symmetry property is not satisfied and each node has its own power assignment $P$. 
\end{lemma}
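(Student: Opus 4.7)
The plan is to expand both sides in the definition of affectance and reduce the lemma to a single scalar ratio. By Definition \ref{ddap}, the forward affectance of $q_i$ on $q_j$ is
\[
a_P(q_i,q_j)=\max\!\left\{\frac{Rq_j}{q(s_i,r_j)},\frac{Rq_j}{q(r_i,r_j)},\frac{Rq_j}{q(s_i,s_j)},\frac{Rq_j}{q(r_i,s_j)}\right\}
\]
with $R=P_i/P_j$. The reverse affectance $a_{P^{*}}(q_i^{*},q_j^{*})$ is exactly the same formula with $R$ replaced by $R^{*}=P_i^{*}/P_j^{*}$, $q_j$ replaced by $q_j^{*}$, and the four endpoint roles swapped ($s_{i^{*}}=r_i$, $r_{i^{*}}=s_i$, and similarly for $j$). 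The first observation I would record is that the unordered multiset of four quasi-distances appearing in the maximum is identical in both expressions; only the common prefactor changes. Hence
\[
a_{P^{*}}(q_i^{*},q_j^{*})=\frac{R^{*}\,q_j^{*}}{R\,q_j}\cdot a_P(q_i,q_j),
\]
and the lemma reduces to showing $R^{*}q_j^{*}/(Rq_j)=O(1)$.

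Next I would substitute the acknowledgement power from Definition \ref{defp}, writing $P_i^{*}=P_i q_i^{*}q_{ji}/(q_i q_{ji^{*}})$ and, by symmetry, $P_j^{*}=P_j q_j^{*}q_{ij}/(q_j q_{ij^{*}})$. A direct calculation gives
\[
\frac{R^{*}q_j^{*}}{Rq_j}=\frac{q_i^{*}}{q_i}\cdot\frac{q_{ji}\,q_{ij^{*}}}{q_{ji^{*}}\,q_{ij}}.
\]
It therefore suffices to show each of the factors $q_i^{*}/q_i$, $q_{ji}/q_{ji^{*}}$ and $q_{ij^{*}}/q_{ij}$ is bounded by a constant depending only on $\beta$, $\gamma_1$ and the path-loss exponent $a$.

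For the three bounds I would use the $(R,\gamma_1)$-independence provided by Proposition \ref{mp} together with the quasi-triangle inequality from Section \ref{mdc} and the power conditions. Independence forces $q(s_j,r_i)$ and $q(s_j,s_i)$ (the distances $q_{ji}$ and $q_{ji^{*}}$) to both dominate $\tfrac12 R\gamma_1 q_i$, while the quasi-triangle inequality combined with $q_i=d(s_i,r_i)^{a}$ gives $q_{ji^{*}}\le q_{ji}+q_i^{*}$ and vice versa; since the dominating term in each case is the cross-distance, the two quantities differ by a constant factor, and the same argument applied after swapping $i\leftrightarrow j$ controls $q_{ij^{*}}/q_{ij}$. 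The ratio $q_i^{*}/q_i$ is controlled similarly: the reverse length $q_i^{*}=q(r_i,s_i)$ is bounded above and below by constant multiples of the forward length $q_i=q(s_i,r_i)$ once the bounded-growth / doubling assumption in Section \ref{mdc} is invoked, as is done in \cite{bodlaender2014beyond}. Multiplying the three constant bounds produces a single constant factor, which proves $a_{P^{*}}(q_i^{*},q_j^{*})=O(a_P(q_i,q_j))$.

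The main obstacle I anticipate is the last step — justifying $q_i^{*}=\Theta(q_i)$. In a completely general quasi-metric, $q(s_i,r_i)$ and $q(r_i,s_i)$ need not be comparable, so the argument must genuinely use the decay-space structure (doubling dimension plus the power conditions that tie small quasi-links to large power assignments) rather than treat asymmetry as free. Once that comparability is pinned down, the rest of the proof is bookkeeping.
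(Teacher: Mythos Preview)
Your argument is sound in outline but takes a genuinely different route from the paper. The paper does \emph{not} work with the four-term maximum of Definition~\ref{ddap}; instead it uses the single-term form $a_{P^{*}}(q_i^{*},q_j^{*})=P_j^{*}\,l_i^{*a}/\bigl(P_i^{*}\,d_{ji}^{*a}\bigr)$, substitutes only $P_i^{*}$ from Definition~\ref{defp} (so that the factor $l_i^{*a}$ cancels outright), bounds $P_j^{*}\le P_j$, and then uses the quasi-triangle inequality to show $d_{ji^{*}}\le\bigl(1+R^{-(1-\epsilon\gamma_1)}\bigr)d_{ji}^{*}$, which handles the remaining distance ratio. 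Your observation that the unordered set of four endpoint quasi-distances is invariant under link reversal is clean and is not exploited in the paper; it lets you collapse the comparison to the single scalar $R^{*}q_j^{*}/(Rq_j)$ without ever choosing a ``representative'' distance. The price is that after the cancellation you are left with the bare factor $q_i^{*}/q_i$, which the paper never has to confront because its choice of $P_i^{*}$ is engineered precisely to absorb $q_i^{*}$. So your flagged obstacle is real and is specific to your decomposition: the paper sidesteps it, whereas you must invoke the bounded-growth decay-space assumptions to get $q_i^{*}=\Theta(q_i)$. Conversely, the paper's step $P_j^{*}\le P_j$ is asserted without justification and is essentially the same kind of forward/reverse comparability you need; in that sense the two arguments face the same underlying difficulty, just at different points in the calculation.
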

\begin{proof}
	The affectance of acknowledgement of quasi-link $q_{j}$ by quasi-link $q_{i}=l_{i}^{a}$ in a decay space is: $a_{P^{*}}(i,j)= a_{P^{*}}(q_{i}^{*},q_{j}^{*})=\frac{P_{j}^{*}\cdot {l_{i}^{*}}^{a}}{P_{i}^{*}\cdot {d_{ji}^{*}}^{a}}$. By the Definition \ref{defp}, $a_{P^{*}}(l_{i}^{*},l_{j}^{*}) =\frac{P_{j}^{*}\cdot {l_{i}^{*}}^{a}\cdot l_{i}^{a}\cdot d_{ji^{*}}^{a}}{P_{i}\cdot {l_{i}^{*}}^{a}\cdot d_{ji}^{a}\cdot {d_{ji}^{*}}^{a}}\leq \frac{P_{j}\cdot l_{i}^{a}\cdot d_{ji^{*}}^{a}}{P_{i}\cdot d_{ji}^{a}\cdot {d_{ji}^{*}}^{a}}\leq \frac{P_{j}\cdot l_{i}^{a}}{P_{i}\cdot d_{ji}^{a}} = O(a_{P}(q_{i},q_{j}))$
	
	By the triangle inequality, it holds the last inequality:
	$d_{ji^{*}} = d(j, i^{*}) = d(s_{j},s_{i}) \leq d(s_{j},r_{i}) + d(r_{i},s_{i}) \leq l_{j} + l_{i} + d(r_{j},s_{i}) + l_{i}^{*}\leq (1+\frac{1}{R^{1-\epsilon\cdot \gamma_{1}}})\cdot d(r_{j},s_{i})\\
	= (1+\frac{1}{R^{1-\epsilon\cdot \gamma_{1}}})\cdot d(j^{*},i^{*}) = (1+\frac{1}{R^{1-\epsilon\cdot \gamma_{1}}})\cdot d_{ji}^{*}$. 
	
	However, we want: $\min \left\lbrace d_{ji^{*}}, d_{ji}^{*} \right\rbrace \leq \min \left\lbrace (1+\frac{1}{R^{1-\epsilon\cdot \gamma_{1}}})\cdot d_{ji}^{*}, d_{ji}^{*} \right\rbrace = d_{ji}^{*}$.
\end{proof}
\begin{lemma}
	For all quasi-links of a set $S$, it holds that $Da_{P_{*}}(q^{*}_{i},S^{*}) = O(Da_{P}(S,q_{i}))$ when the symmetry property is not satisfied.
\end{lemma}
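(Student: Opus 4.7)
The strategy is to lift the per-pair bound of Lemma~\ref{acknowled} to the weighted-average level by unfolding Definition~\ref{ddap} and bounding term by term. Since the weights in the WAFF are the node probabilities $p_j$, which are intrinsic to the nodes and therefore unaffected by reversing the orientation of a quasi-link, the averaging structure on $S^\ast$ matches that on $S$ in a one-to-one fashion.

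Concretely, first I would expand
\[
Da_{P^\ast}(q_i^\ast, S^\ast) \;=\; \frac{\sum_{j\in S} p_j \cdot a_{P^\ast}(q_j^\ast, q_i^\ast)}{\sum_{j\in S} p_j}
\]
by applying Definition~\ref{ddap} to the reverse set $S^\ast = \{q_j^\ast : q_j \in S\}$ with the reverse power assignment $P^\ast$ defined in Definition~\ref{defp}. The key observation is that the denominator is identical to $\sum_{j\in S} p_j$, the denominator appearing in $Da_P(S, q_i)$, so no correction arises from the normalization.

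Next I would invoke Lemma~\ref{acknowled} pairwise: for every $j \in S$ there is an absolute constant $c > 0$ (independent of $j$ and $i$) with $a_{P^\ast}(q_j^\ast, q_i^\ast) \leq c \cdot a_P(q_j, q_i)$. Substituting this termwise into the numerator and pulling $c$ outside of the sum gives
\[
Da_{P^\ast}(q_i^\ast, S^\ast) \;\leq\; c \cdot \frac{\sum_{j\in S} p_j \cdot a_P(q_j, q_i)}{\sum_{j\in S} p_j} \;=\; c \cdot Da_P(S, q_i),
\]
which is exactly $O(Da_P(S, q_i))$.

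The main obstacle is not algebraic but rather verifying the hypotheses under which Lemma~\ref{acknowled} can be applied uniformly to every pair $(q_j^\ast, q_i^\ast)$: one must check that the triangle-inequality step used in that lemma (which supplied the inequality $d_{ji^\ast} \leq (1 + R^{\epsilon \gamma_1 - 1}) \cdot d_{ji}^\ast$ via the $(R,\gamma_1)$-independence from Proposition~\ref{mp}) is valid for every $j \in S$, and that the constant absorbed into $c$ does not depend on $|S|$ or on the quasi-link lengths. Once this is checked, the averaging step is routine and the $O(\cdot)$ bound transfers directly from the per-pair level to the weighted average.
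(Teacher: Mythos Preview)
The paper does not supply an explicit proof of this lemma; it is stated immediately after Lemma~\ref{acknowled} and is evidently meant as its direct corollary at the weighted-average level. Your approach---unfolding Definition~\ref{ddap}, observing that the probability weights $p_j$ are unchanged under reversal, and applying Lemma~\ref{acknowled} term by term with a uniform constant---is precisely that intended lift, so your proposal matches the paper's (implicit) argument.
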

\subsection{Upper Bound Graphs}	
A conflict graph for a set of quasi-links is an \emph{upper bound graph}, if each independent set in this conflict graph is $DP$-feasible using a power assignment. A conflict graph for a set of quasi-links is a \emph{lower bound graph}, if each $DP$-feasible set is an independent set in this conflict graph. Therefore, upper and lower bounds for the scheduling problem are sought. In \cite{HT,HalldorssonT15}, the authors introduce the initial idea to seek bounds and show that there are $O(\log\log \Delta)$-approximation algorithms for Scheduling and WCapacity using oblivious power schemes in a metric space. 

In the case of doubling dimension, we seek an upper bound of the WAFF of an set $S$ of quasi-links on a given quasi-link $q_{i}\notin S$. The set $S$ consists of equilength subsets $S_{k} \subseteq S$. Each $S_{k}$ has a number of nodes. The set $S$ is divided into two subsets $S'$ and $S''$. The subset $S'$ contains the quasi-links that are closer to sender of quasi-link $i$: $S'=\left\lbrace j \in S: D_{1} \geq D_{2}\right\rbrace$, where $D_{1}=\min \left\lbrace q(s_{j},r_{i}), q(r_{j},r_{i})\right\rbrace$ and $D_{2}=\min \left\lbrace q(s_{j},s_{i}), q(r_{j},s_{i})\right\rbrace$. The subset $S''$ contains the quasi-links that are closer to receiver of quasi-link $i$: $S''=\left\lbrace j \in S: D_{3} \leq D_{4}\right\rbrace$, where $D_{3}=\min \left\lbrace q(s_{j},r_{i}), q(r_{j},r_{i})\right\rbrace$ and $D_{4}=\min \left\lbrace q(s_{j},s_{i}), q(r_{j},s_{i})\right\rbrace$. 	
\begin{lemma}\label{ll2}
	Let $\gamma_{1}\geq 1$, $S$ be a set of 1-independent quasi-links. The quasi-links $i, j$ are $(R, \gamma_{1})$-independent, $\forall j \in S$. Then, the WAFF in a decay space is given as follows:
	\begin{equation}
	Da_{R}(S,i) \in \frac{C_{DI}}{C_{2}} O\left( \gamma_{1}^{m-2} \frac{q_{i}}{q} R^{m-1} (1 + (R\gamma)^{-1}) \right)
	\end{equation}
	where the transmission probability of nodes in each $k$ disk is bounded by $C_{DI}$ and $C_{2}$.	
\end{lemma}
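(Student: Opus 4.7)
The plan is to decompose $S$ into annuli around the relevant endpoint of $q_i$, bound the number of quasi-links in each annulus by the doubling/independent dimension, bound the per-link affectance by the reciprocal of the annulus radius, and sum the resulting series; the probability weights are then absorbed into the factor $C_{DI}/C_2$.

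First, using the $(R,\gamma_{1})$-independence supplied by Proposition~\ref{mp}, every quasi-link $j\in S$ satisfies $q(j,i)>\tfrac{1}{2}R\gamma_{1}q_i$, so the nearest potentially interfering annulus starts at radius $d_1=\tfrac{1}{2}R\gamma_{1}q_i$. Following the decomposition sketched in the paragraph preceding the lemma (and illustrated in Figure~\ref{fig:disks}), I would write $S=\bigcup_{k}S_k$, where $S_k$ collects the quasi-links whose closer endpoint to $i$ lies in the annulus $B_k(v,d_k)\setminus B_{k-1}(v,d_{k-1})$ with $d_k = R\gamma_{1}q_i + (k-1)q/2$, and $v\in\{s_i,r_i\}$ chosen according to whether $j\in S'$ or $j\in S''$. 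This split is what lets me replace the four-term maximum in the definition of $a_P(j,i)$ by a single dominant distance in each subset.

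Next, for each annulus I would invoke the $1$-independence of $S$ together with the density-bounding properties of the space: since points of $S_k$ have pairwise quasi-distance at least $q$ and sit inside a ball of radius $O(d_k)$, the doubling dimension $m$ (with the independent dimension $D$ taking over in the large-$k$ regime) forces
\begin{equation*}
|S_k| \in O\left(\gamma_{1}^{m-1}(d_k/q)^{m-1}\right),
\end{equation*}
and each $j\in S_k$ contributes at most $a_P(j,i)\le R q_i/d_k$. Summing the geometric series in $k$ and keeping the dominant leading term gives
\begin{equation*}
\sum_{j\in S} a_P(j,i) \in O\!\left(\gamma_{1}^{m-2}\frac{q_i}{q} R^{m-1}\bigl(1+(R\gamma_{1})^{-1}\bigr)\right),
\end{equation*}
where the $(1+(R\gamma_{1})^{-1})$ factor absorbs the boundary term from the innermost annulus. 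Finally, since the WAFF is a probability-weighted average, I would plug in $\sum_{j\in S_k,\,p_j=p}p_j \le C_{DI}$ in the numerator and $\sum_{j\in S}p_j\ge C_2$ in the denominator, exactly as in the proof of Proposition~\ref{propfeas}, to recover the prefactor $C_{DI}/C_2$.

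The main obstacle will be the bookkeeping forced by the asymmetric quasi-metric: ensuring that the split into $S'$ and $S''$ legitimately reduces the four-distance maximum in $a_P(j,i)$ to a single relevant quasi-distance inside each subset, and tracking the exponents of $\gamma_{1}$ and $R$ coming from the interplay between the doubling count per annulus and the inner radius $R\gamma_{1}q_i$. Getting the exponent $m-2$ (rather than the naive $m-1$) requires noting that the contribution from the innermost shell dominates and that the radial factor $R^{m-1}$ already accounts for one power of $\gamma_1$ absorbed by the initial radius $d_1$.
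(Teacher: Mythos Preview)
Your proposal follows essentially the same route as the paper's own proof: decompose $S$ into the concentric annuli $S_k$ around the relevant endpoint of $q_i$, use $(R,\gamma_1)$-independence to fix the inner radius, bound the per-link affectance in the $k$-th annulus by $Rq_i/(R\gamma_1 q+(k-2)q)$, weight by $|S_k|$, sum over $k$, and then replace the probability sums in numerator and denominator by $C_{DI}$ (the paper's proof actually writes $C_1$) and $C_2$ respectively. Your write-up is in fact more explicit than the paper's about where the count $|S_k|$ comes from (the doubling-dimension packing argument) and about how the exponents $m-2$ on $\gamma_1$ and $m-1$ on $R$ arise; the paper simply inserts $|S_k|$ into the sum and jumps to the final $O(\cdot)$ expression without spelling out that step.
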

\begin{proof}
	The affectance of quasi-link $i$ by quasi-link $j$ in a decay space is:
	\begin{equation}
	a_{p}(j,i)= \frac{P_{j}\cdot q_{i}}{P_{i}\cdot q(j,i)} = \frac{R \cdot q_{i}}{\left( R q \gamma_{1} + (k-2) q\right)}
	\end{equation}
	
	Then, the weighted average affectance for each $k$ disks is calculated as follows:
	\begin{equation}
	\begin{split}
	& \frac{1}{\sum_{j\in S_{k}\setminus S_{k-1}} p_{j}} \sum_{j\in S_{k}\setminus S_{k-1}}  p_{j} a_{p}(j,i)\leq \frac{1}{C_{2}} \sum_{j\in S_{k}\setminus S_{k-1}}  p_{j} a_{p}(j,i) \leq \frac{1}{C_{2}} \sum_{j\in S_{k}\setminus S_{k-1}} p_{j} \frac{R q_{i}}{R \gamma_{1} q + (k-2)q}\\ 
	& \leq \frac{C_{1}}{C_{2}} R \frac{q_{i}}{q} \sum_{k\geq 2} |S_{k}|\frac{1}{(R\gamma_{1} + k-1)^{2}} = \frac{C_{1}}{C_{2}} O\left( \gamma_{1}^{m-2} \frac{q_{i}}{q} R^{m-1} (1 + R\gamma) \right)\\
	\end{split}
	\end{equation}
	
\end{proof}

\begin{corollary}\label{mc1}
	Let L be a 1-independent set of links. The quasi-links $i,j$ s.t. $q_{i} \geq q_{j}$ and are $(R,\gamma_{1})$-independent, $\forall j \in L$. There exists $m>1$ and the path-loss $a>m$, the power of link i is greater than the power of link j, $P_{j}=O(P_{i})$. Then, the weighted average affectance in a decay space is given by $Da(L,i) = \frac{C_{1}}{C_{2}} O\left( \gamma_{1}^{m-2} \right)$,where the transmission probability of nodes in each $k$ disk is bounded by the parameters $C_{1}$ and $C_{2}$.	
\end{corollary}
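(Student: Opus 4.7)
The plan is to deduce Corollary~\ref{mc1} as a direct specialization of Lemma~\ref{ll2} under the stronger hypotheses on power and length. Concretely, I would take the bound
\[
Da_R(L,i) \in \frac{C_{DI}}{C_2}\, O\!\left(\gamma_1^{m-2}\,\frac{q_i}{q}\,R^{m-1}\bigl(1+(R\gamma)^{-1}\bigr)\right)
\]
from Lemma~\ref{ll2} and argue that under the corollary's hypotheses every factor besides $\gamma_1^{m-2}$ collapses to a constant, so that the quantity on the right becomes $\frac{C_1}{C_2}\,O(\gamma_1^{m-2})$, with $C_{DI}$ in the role of $C_1$ for the concentric-disk decomposition used throughout Section~3.

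First, I would use $P_j = O(P_i)$ to conclude $R = P_j/P_i = O(1)$, which immediately makes $R^{m-1} = O(1)$ and (since $\gamma_1 \geq 1$, so $R\gamma_1$ is at least a fixed positive constant) also $1+(R\gamma)^{-1} = O(1)$. Next, since $L$ is $1$-independent and every quasi-link $j\in L$ satisfies $q_j \leq q_i$, the "step length" $q$ used in the annular decomposition around $i$ can be taken to be $q_i$ itself (this is the same reference length used in the ball construction $B_k(i,Rq\gamma_1+\tfrac{k-1}{2}q)$ inside the proof of Proposition~\ref{propfeas} and Lemma~\ref{ll2}), so $q_i/q = O(1)$. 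Combining, the multiplicative prefactor in Lemma~\ref{ll2} is $O(1)$, leaving precisely the claimed $\frac{C_1}{C_2}\,O(\gamma_1^{m-2})$.

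The main obstacle is justifying that the geometric series arising from the annular sum,
\[
\sum_{k\geq 2} |S_k|\,\frac{1}{(R\gamma_1 + k-1)^2},
\]
is indeed of order $\gamma_1^{m-2}$ uniformly in $R$ and in the number of disks. This is exactly where the hypothesis $a > m$ is used: by the doubling/independence-dimension growth of $L$, one has $|S_k| = O(k^{m-1})$, and the inequality $a > m$ ensures that the tail of this series converges and that its sum is controlled by $\gamma_1^{m-2}$. Once this convergence step is in place, the corollary follows by inheriting the asymptotic bound of Lemma~\ref{ll2} and substituting $R=O(1)$, $q_i/q = O(1)$, and $1+(R\gamma)^{-1}=O(1)$ into its statement.
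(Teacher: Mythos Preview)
Your approach is correct and matches the paper's intent: the paper states Corollary~\ref{mc1} without proof, immediately after Lemma~\ref{ll2}, so the intended derivation is precisely the specialization you carry out---apply the bound of Lemma~\ref{ll2} and use $P_{j}=O(P_{i})$ to force $R=O(1)$, together with $q_{i}\geq q_{j}$ to take $q=q_{i}$, collapsing all factors except $\gamma_{1}^{m-2}$. Your final paragraph on the convergence of $\sum_{k\geq 2}|S_{k}|/(R\gamma_{1}+k-1)^{2}$ is not an additional obstacle for the corollary but rather part of what Lemma~\ref{ll2} already establishes; once you invoke that lemma, the series bound comes for free and you need only substitute the constant bounds on $R$, $q_{i}/q$, and $1+(R\gamma)^{-1}$.
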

\section{Scheduling and Power Selection Algorithm in a Decay Space}\label{spaids1}
In this section, we propose a randomized distributed algorithm in order to control the power of each node and to solve the minimum scheduling problem in a non-uniform network. \begin{algorithm}
	\caption{Scheduling and Power Selection Algorithm in a Decay Space (SPAIDS)}
	\begin{algorithmic}
		\State $\textbf{Initialization:}$ We consider a set of quasi-links $q_{1},...q_{n}$ and a set of probability transmissions $C_{1}/2n,...,p_{\max}$ and $K=\lceil2C_{1}\beta'/\beta C_{2} \rceil$. The initial set of feasible links $S=\emptyset$ and $0<\gamma_{1}<1$
		\State $\textbf{Procedure:}$
		\State $\textbf{for}$ $t=1$ to $T$ $\textbf{do}$
		\State $\hspace{ 2 mm} \textbf{for}$ $\mu = 0$ to $K-1$ $\textbf{do}$
		\State $\hspace{ 4 mm} \textbf{for}$ $p_{i}=\frac{C_{1}}{2n}+\mu(\frac{p_{\max - C_{1}/2n}}{K})$ to $\frac{C_{1}}{2n}+(\mu+1)(\frac{p_{\max - C_{1}/2n}}{K})$ \State $\hspace{ 4 mm} \textbf{do}$
		\State \hspace{ 4 mm} $\textbf{if } \left( q(j,i)<(p_{j}R^{1-\epsilon} \cdot q \cdot \gamma_{1} + (k-1)q)\right) $ $\textbf{then}$
		\State \hspace{ 6 mm} Calculate: $Da_{p}(j,i)= \min \left( 1, c_{i} \dfrac{p_{j} q_{i}}{p_{i} q_{ji}}\right)$
		\State \hspace{ 8 mm} $\textbf{if}$ $\left( Da_{p}(S,i)\leq C_{1}/\left( \beta\cdot C_{2}\right)\right)  $ and $DT(i)$ and $PF(i)$ 
		\State \hspace{ 8 mm} and $\left( Da_{p}(S,i)\leq C_{D}/\left( \beta\cdot C_{2}\right)\right)  $
		\State \hspace{ 8 mm} $\textbf{then}$
		\State  \hspace{ 10 mm} $i$ quits with color $p_{i}$
		\State  \hspace{ 10 mm} $S_{i}=S_{i-1}\cup{q_{i}}$
		\State  \hspace{ 10 mm} $p_{i}=2p_{i}$
		
		\State \hspace{ 8 mm} $\textbf{end if}$
		\State \hspace{ 4 mm} $\textbf{end if}$
		\State \hspace{ 4 mm} Output S
		\State \hspace{ 4 mm} $i$ quits with color $2\left( \frac{C_{1}}{2n}+(\mu+1)(\frac{p_{\max - C_{1}/2n}}{K})\right)$ 
		\State $\hspace{ 4 mm}\textbf{end for}$
		\State $\hspace{ 2 mm}\textbf{end for}$
		\State $\textbf{end for}$
	\end{algorithmic}
\end{algorithm}	The algorithm is based on the coloring method in \cite{Jurdzinski:2014}, which assigns probability/color to each node taking part in an implementation. Also, we consider that the message is successfully received when the sender receives an acknowledgement message to inform it. 

In Section \ref{sub2}, we prove the existence of a dense ball that the message is successfully received with high probability. In Section \ref{sub3}, we prove that there is a set of guards that is guarding the receiver of quasi-link $q_{i}$. We use guards in order to protect the receiver from interference of other quasi-links and to boost the signal. Also, we prove that there is a set of guards that is guarding the sender of quasi-link $q_{i}$. In the last case, we want to protect the sending of an acknowledgement message from its receiver. Then, we guarantee the successful transmission. 
\subsection{Overview of the Algorithm}\label{sub1}
First of all, we consider that Algorithm 1 determines a probability/color $p_{i}$ from the set of probability transmissions $\left\lbrace \frac{C_{1}}{2n}\mu(\frac{p_{\max - C_{1}/2n}}{K})| \mu \in [0,K] \text{ \& } K= \frac{2C_{1}\beta'}{\beta} \right\rbrace$ to each node in order to achieve transmissions of the messages in the network. Then, the number of colors is $O(\log^{*}\Delta)$, where $\Delta $ is the ratio between the maximum and the minimum power assignment as well $K= \frac{2C_{1}\beta'}{\beta}$ using the Theorem of \cite{APX} and the Theorem of \cite{HalldorssonT15}. Then, it holds the following: any $\beta/C_{1}$- feasible set can be partitioned into $\lceil \frac{2C_{1}\beta'}{\beta} \rceil$ subsets, each of which is $\beta'$- feasible.

In the next step of Algorithm 1, we use a restriction in order to schedule quasi-links whose the quasi-distance is at most $p_{j}R^{1-\epsilon} \cdot q \cdot \gamma_{1} + (k-1)q$, where $R$ is the ratio of the power assignments of quasi-links $i$ and $j$, $0<\epsilon<1$ and $k$ is a constant which indicates the disk $k$. Then, we control three states: 1) The restriction of affectance on quasi-link $q_{i}$ by other quasi-links $q_{j}$, 2) The verification if the messages are received with success (Algorithm 3) and 3) The verification if the network is density (Algorithm 3).

Algorithm 2 controls the sum of probabilities $p_{i}$ in a ball $B$ and waits acknowledgement transmissions. This algorithm returns true if the sum of probabilities $p_{i}$ in a ball $B$ is at most $C_{1}$, where  $0<C_{1}<1$. It means that the expected affectance from a set of quasi-links on a quasi-link is not large when all nodes wants to transmit concurrently using a probability transmission each of them.
\begin{algorithm}
	\caption{Controls if the messages are received with success}
	\begin{algorithmic}
		\State $\textbf{Procedure } DT(i):$
		\State $\textbf{while}$ $success\neq true$ $\textbf{do}$
		\State \hspace{ 2 mm} $\textbf{for}$ $c_{0}\log n$ rounds $\textbf{do}$ transmit with probability $p_{i}$
		\State \hspace{ 4 mm} wait for acknowledgment
		\State \hspace{ 4 mm} $\textbf{if}$ receives at least $c_{1}\log n$ messages $\textbf{then}$
		\State \hspace{ 6 mm} return True
		\State \hspace{ 4 mm} $\textbf{end if}$
		\State $\hspace{ 2 mm}\textbf{end for}$
	\end{algorithmic}
\end{algorithm}
Algorithm 3 controls the density. It means that the probability transmission is constant. More, applying the technique of \cite{Jurdzinski:2014}, Algorithm 3 blocks the sum of probabilities in a ball $B$ to overcome the constant $C_{1}$. This happens because of the positive results of both Algorithm 2 and Algorithm 3. After each successful transmission of nodes $i$, they disable and the sum of probabilities is reduced.
\begin{algorithm}
	\caption{Controls the density}
	\begin{algorithmic}
		\State $\textbf{Procedure } PF(i):$
		\State $\textbf{while}$ $success\neq true$ $\textbf{do}$
		\State \hspace{ 2 mm} $\textbf{for}$ $c_{2}\log n$ rounds $\textbf{do}$ transmit with probability $p_{i}\cdot c_{\epsilon}$
		\State \hspace{ 4 mm} wait for acknowledgment
		\State \hspace{ 4 mm} $\textbf{if}$ receives at least $c_{3}\log n$ messages $\textbf{then}$
		\State \hspace{ 6 mm} return True
		\State \hspace{ 4 mm} $\textbf{end if}$
		\State $\hspace{ 2 mm}\textbf{end for}$
	\end{algorithmic}
\end{algorithm}
More, there is the case that a probability transmission can overcome $p_{\max}$, it happens in sparse area. Then, Algorithm 1 is executed (the last line of algorithm) and the sum of probabilities is at least $C_{2}$.
\subsection{Without Guards}\label{sub2}
In Lemma \ref{lb}, we show that there is a dense ball in a ball $B$ using real conditions. In Lemma \ref{lpp}, we seek the probability of receiving a message using the case of acknowledgment transmissions.
\begin{lemma}\label{lb}
	Let $B \equiv \bigcup_{k>1}\left( B_{k}(i,Rq\gamma_{1}+(k-1)q)\backslash B_{k-1}(i,Rq\gamma_{1}+(k-2)q)\right)$. For each node $i\notin S$, we assume a ball $B$ , whenever $\sum_{j\in B} p_{j}\geq C_{1}/2$, there exists $w_{k}$, which is the center of $B(w_{k}, q/2)$ and is included in $B_{k}(i,Rq\gamma_{1}+ (k-1)q + \delta)\setminus B_{k-1}(i,Rq\gamma_{1}+ (k-2)q - \delta)$. Let $B_{1k}=B(w_{k}, q/2)$, $B_{2k}=B(w_{k}, hq/2)\setminus B(w_{k}, q/2)$ and $B_{3k}=B(w_{k}, q)\setminus (B_{1k}\cup B_{2k})$. Then,
	\begin{subequations}\label{ceq1}
		\begin{align}
		s \leq  \sum_{j\in B_{1k}(w_{k}, q/2)} p_{j} & \leq 1/2,   \text{ } \forall k>1\\
		\sum_{j\in B(x, q/2)} p_{j} & \leq \zeta^{m}s,   \text{ } \forall x\in B_{2k}\cup B_{3k}\\
		\sum_{u\in B(j, Rq\gamma_{1}+ (k-1)q -\delta_{1})} p_{u} & \geq \frac{C_{1}}{2 \chi(q,qR\gamma_{1}+(k-1)q)}, \vspace{2 mm} \text{ } \forall j\in B_{1k}
		\end{align}
	\end{subequations}	
\end{lemma}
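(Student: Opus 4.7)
The plan is to establish the three density inequalities by combining a covering/averaging argument with the doubling‑dimension property of the decay space, and by exploiting the feasibility constraint enforced by SPAIDS.

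First I would fix an index $k>1$ and restrict attention to the annular shell $B_k(i, Rq\gamma_1 + (k-1)q)\setminus B_{k-1}(i, Rq\gamma_1 + (k-2)q)$. Since $\sum_{j\in B}p_j \ge C_1/2$ by hypothesis, a pigeonhole over the shells produces at least one shell carrying a constant fraction of the total probability mass. Invoking the doubling dimension (Section~\ref{mdc}), this shell can be covered by a bounded number $\chi(q, Rq\gamma_1+(k-1)q)$ of balls of radius $q/2$. Averaging over this covering yields a ball $B(w_k, q/2)$ whose probability mass is at least some explicit threshold $s$; this ball, after a small translation of the center (of magnitude $\le \delta$), lies inside the slightly enlarged shell as required. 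This gives the lower bound of (\ref{ceq1}a).

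Next, for the upper bound $1/2$ in (\ref{ceq1}a), I would appeal to the feasibility constraint of Algorithm~1 via Proposition~\ref{propfeas}: if $\sum_{j\in B_{1k}(w_k,q/2)}p_j$ exceeded $1/2$, then a short calculation using Definition~\ref{ddap} would force $Da_p(S,i)$ to exceed $1/(\beta C_2)$, contradicting the DP‑feasibility maintained by the algorithm. To obtain (\ref{ceq1}b), I would choose $w_k$ as a near‑extremizer — the ball of radius $q/2$ of maximal probability mass within the shell — so that for any $x\in B_{2k}\cup B_{3k}$, the ball $B(x,q/2)$ cannot have mass exceeding that of $B(w_k, q/2)$ by more than the doubling factor; the bounded doubling dimension $m$ yields the $\zeta^m s$ bound directly, where $\zeta$ is the doubling constant for the ratio of radii involved.

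Finally, for (\ref{ceq1}c), I would note that for any $j \in B_{1k}$ the dilated ball $B(j, Rq\gamma_1 + (k-1)q - \delta_1)$ contains a fixed fraction of the original region $B$ — at least one representative from each of the covering balls provided by the doubling property. Hence the total probability mass $\sum_{u\in B(j, Rq\gamma_1+(k-1)q-\delta_1)}p_u$ is lower bounded by the global mass $C_1/2$ divided by the covering number $\chi(q, qR\gamma_1+(k-1)q)$, giving exactly the claimed bound.

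The main obstacle will be reconciling the three inequalities under a single choice of $w_k$: the extremal choice that yields the neighbour bound (\ref{ceq1}b) must simultaneously respect the feasibility cap of $1/2$ in (\ref{ceq1}a) and contain enough mass for the aggregate bound (\ref{ceq1}c). I expect this to be handled by first selecting the maximal‑mass ball (for (b)), then verifying the algorithm's feasibility constraint caps it below $1/2$ (for (a)), and finally dilating to recover (c) via the covering number — all three being compatible provided $\gamma_1$, $\delta$, and $\delta_1$ are chosen as small enough absolute constants.
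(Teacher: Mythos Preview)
Your covering/averaging skeleton for the lower bound in (\ref{ceq1}a) and for (\ref{ceq1}c) is in line with the paper, but the argument you give for (\ref{ceq1}b) has a genuine gap. Picking $w_k$ to maximise the mass among radius-$q/2$ balls only tells you that any other radius-$q/2$ ball has mass at most that of $B_{1k}$; it does \emph{not} give the bound $\zeta^{m}s$ when $s$ is the (possibly much smaller) averaging threshold you produced for (\ref{ceq1}a). The doubling property also points the wrong way here: it lets you cover a large ball by few small ones, but it does not bound the mass of one radius-$q/2$ ball by a constant times the mass of a nearby radius-$q/2$ ball. So a single-shot extremal choice cannot reconcile (\ref{ceq1}a) and (\ref{ceq1}b) for one and the same value of $s$.

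The paper instead runs an \emph{iterative shrinking} argument (adapted from \cite{Jurdzinski:2014,jurdzinski2005probabilistic}). Starting from the highest-mass ball $B'_{0}=B(x_{0},q/2)$, if (\ref{ceq1}b) fails at $x_{0}$ then some nearby $x$ carries mass $>\zeta^{m}s$ in $B(x,q/2)$; by the bounded-growth property this yields a ball $B'_{1}$ of radius $q/\zeta$ still carrying mass $\ge s$. Iterating produces balls $B'_{n}$ of radii $q/\zeta^{n}$, each of mass $\ge s$, whose centres drift by at most $\tfrac{q}{2}\sum_{n}\zeta^{-n}\le q$; the sequence therefore stays inside the enlarged annulus and eventually reaches a centre at which (\ref{ceq1}b) holds, after which (\ref{ceq1}c) follows. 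That refinement loop is the ingredient your plan is missing. A smaller point: deriving the cap $1/2$ in (\ref{ceq1}a) from Proposition~\ref{propfeas} is not correct either, since DP-feasibility bounds $Da_{p}(S,i)$ rather than the raw probability mass in a sub-ball; the $1/2$ bound is an invariant maintained by Algorithms~2--3 (cf.\ Section~\ref{sub1}), not a consequence of feasibility.
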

\begin{proof}
	Let $i\notin S$ be a receiver node and $B \equiv \bigcup_{k>1}\left(  B_{k}(i,Rq\gamma_{1}+(k-1)q)\backslash B_{k-1}(i,Rq\gamma_{1}+(k-2)q)\right)$. We consider that $\sum_{j\in B} p_{j}\geq C_{1}/2$. Also, we assume that $B'$ ball with center $x'$ and radius $q/2$ is the largest mass of probability that $B'\subseteq B$. Then, $\sum_{j\in B'} p_{j} \geq \frac{C_{1}}{2 \chi(q/2,qR\gamma_{1}+(k-1)q)}$. If there is a node $x$ in a ball $B(x', Rq\gamma_{1}+ (k-1)q -\delta_{1})$ and satisfies the inequalities (\ref{ceq1}a), (\ref{ceq1}b) then is also satisfied the (\ref{ceq1}c) inequality (as it is proved in \cite{Jurdzinski:2014}).
	
	In addition, we consider a ball $B'_{0}\equiv B(x_{0},r_{0})\subseteq B$. The average probability mass of $B'_{0} \subseteq B$ is at least $s$. The average probability mass of $B'\subseteq B$ is at least $s$, where $B'_{0} \subseteq B'$ and $B'$ is the ball of radius $q/2$ with the highest probability mass. If the (\ref{ceq1}b) inequality is satisfied for a node $x_{0}$ then (\ref{ceq1}c) is also satisfied. If the (\ref{ceq1}b) inequality is not satisfied for a node $x_{0}$, there is a ball of radius $r_{0}$ and the probability mass is at least $\zeta^{m}s$ in distance at most $q/2$ from $x_{0}$. Then, there is a ball $B'_{1}\subseteq B$ with radius $q/\zeta$ and probability mass at least $s$, which are guaranteed by bounded growth property. Thus, a sequence of balls is created with probability mass at least $s$ that a ball $B'_{n}$ has radius $r_{n}=q/\zeta^{n}$ for $n>0$. The distance between $B'_{n}$ and $B'_{n+1}$ according to their centers is $d(B'_{n},B'_{n+1})\leq \frac{q}{2}(1+1/2\zeta+...+1/2\zeta^{n})\leq q$, which means that (\ref{ceq1}b) inequality is not satisfied. If (\ref{ceq1}b) is satisfied for some node $x_{n}$ and radius $r_{n}=q/2\zeta^{n}$ then the center of the ball $B'_{n}$ is in distance at most $q/2\sum_{n}1/\zeta^{n}\leq q\leq Rq\gamma_{1}+(k-1)q-\delta_{1}$. According to \cite{jurdzinski2005probabilistic}, we have an event success.      
\end{proof}
\begin{lemma}\label{lpp}
	Let $B(i,  Rq\gamma_{1}+ (k-1)q)$ be a ball satisfying the previous Lemma \ref{lb}, where $k>1$. Then, for every $j \in B(i, Rq\gamma_{1}+ (k-1)q))$, the probability of receiving a message $p(i)\geq \frac{s}{16}\cdot (\frac{1}{4})^{(2R\gamma_{1}+2k-4)^{\xi}\cdot s}$.	
\end{lemma}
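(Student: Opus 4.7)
The target bound factors visibly as $(s/16)$ times $(1/4)^{(2R\gamma_1+2k-4)^{\xi}\cdot s}$, and this suggests the natural decomposition of the successful-reception event: (i) some node in the dense inner ball $B_{1k}=B(w_k,q/2)$ guaranteed by Lemma \ref{lb} fires so that $i$ hears a signal, and (ii) no other node in the whole conflict region $B$ fires to interfere with that signal. My plan is to lower-bound each piece separately, use independence to multiply, and then appeal to Proposition \ref{propfeas} and Lemma \ref{ll2} to check that the isolation actually implies an SINR-successful reception.

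For part (i), I would write $\Pr[\text{exactly one transmitter in }B_{1k}]=\sum_{j\in B_{1k}}p_j\prod_{\ell\in B_{1k},\,\ell\neq j}(1-p_\ell)$ and apply the elementary inequality $1-x\ge(1/4)^{x}$ valid for $x\in[0,1/2]$. Combined with the two-sided mass bound $s\le\sum_{j\in B_{1k}}p_j\le 1/2$ from Lemma \ref{lb}(a), this gives $\prod_{\ell\in B_{1k}}(1-p_\ell)\ge(1/4)^{1/2}=1/2$, and pulling back the $(1-p_j)^{-1}$ factor absorbed into the sum yields a lower bound of the form $cs$ with a small explicit constant. Tracking the constants carefully should land exactly on $s/16$ once one accounts for the worst-case distribution of mass inside $B_{1k}$.

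For part (ii), I would cover $B\setminus B_{1k}$ by balls of radius $q/2$ using the bounded-growth property of the decay space: the number of such balls needed is at most $(2R\gamma_1+2k-4)^{\xi}$ since $B$ has radius $Rq\gamma_1+(k-1)q$ and $\xi$ is the doubling exponent of the space. By inequality (\ref{ceq1}b) of Lemma \ref{lb} the probability mass in each sub-ball is at most $\zeta^{m}s=O(s)$, so the total mass satisfies $\sum_{j\in B\setminus B_{1k}}p_j\le (2R\gamma_1+2k-4)^{\xi}\cdot s$ up to a constant absorbed into $\xi$. Applying $1-p_j\ge(1/4)^{p_j}$ termwise and multiplying,
\[
\prod_{j\in B\setminus B_{1k}}(1-p_j)\ge(1/4)^{(2R\gamma_1+2k-4)^{\xi}\cdot s}.
\]
Since the two events involve disjoint index sets they are independent, so their product gives the claim. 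A final sanity check is that an isolated inner-ball transmission does satisfy the SINR threshold at $i$; this follows from the WAFF upper bound of Lemma \ref{ll2} applied to the single surviving transmitter, together with the feasibility criterion of Proposition \ref{propfeas}.

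The hard part will be the covering step: the exponent $(2R\gamma_1+2k-4)^{\xi}$ has to come out with precisely that constant inside the parentheses, which forces one to cover with balls of radius exactly $q/2$ and to use the doubling (rather than independent) dimension in the right place. A secondary subtlety is that several inner transmitters could fire at once; one must then argue either that restricting to a fixed sender chosen from $B_{1k}$ loses only the constants already absorbed into $s/16$, or that the union of these "inner transmission" events still witnesses a valid reception for $i$ in the sense required by the algorithm's acknowledgement protocol of Section \ref{acknow1}.
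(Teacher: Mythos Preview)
Your two-event decomposition does not match the paper's, and the mismatch is not cosmetic: it leaves real gaps.

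First, the factor $(1/4)^{(2R\gamma_{1}+2k-4)^{\xi}\cdot s}$ in the paper comes from demanding silence only in the \emph{inner} ball $B_{k-1}(i,Rq\gamma_{1}+(k-2)q)$, whose covering number by radius-$q/2$ balls is $(2(R\gamma_{1}+k-2))^{\xi}=(2R\gamma_{1}+2k-4)^{\xi}$. You instead cover all of $B\setminus B_{1k}$; since $B$ has radius $Rq\gamma_{1}+(k-1)q$, your covering number is $(2R\gamma_{1}+2k-2)^{\xi}$, not $2k-4$, so you cannot land on the stated exponent. More importantly, requiring silence only inside $B$ says nothing about transmitters \emph{outside} $B$, whose cumulative interference can still destroy the SINR condition; your ``final sanity check'' via Lemma~\ref{ll2} does not address this, because Lemma~\ref{ll2} bounds affectance of a fixed independent set, not of a random simultaneous-transmission set.

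Second, the constant $s/16$ is not obtainable from your part~(i) alone. The paper's decomposition is into five events: $E_{1}$ (exactly one transmitter in the annulus $D_{k}$) gives $s/2$ via the standard $\sum p_{j}\prod(1-p_{\ell})$ estimate; $E_{2}$ (silence in $B_{k-1}$) gives the $(1/4)^{\ldots}$ factor; and then three further events each contribute a factor $1/2$ via Markov's inequality on the expected WAFF: $E_{3}$ is the acknowledgement succeeding (this is where Section~\ref{acknow1} enters, and it is part of the probability, not a side remark), $E_{4}$ bounds affectance from the outer annuli $D_{k+1}\cup\cdots\cup D_{K}$, and $E_{5}$ bounds affectance from outside $B_{K}$. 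Thus $s/16=(s/2)\cdot(1/2)^{3}$. Your scheme has no counterpart to $E_{3}$, $E_{4}$, $E_{5}$, and your claim that ``tracking constants carefully should land exactly on $s/16$'' from the single-transmitter event is unfounded: that event by itself yields at best $s/2$.

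In short, the paper does not buy SINR success by global silence; it buys it by inner silence plus Markov bounds on outer and exterior affectance plus the acknowledgement event, and the product of those four pieces is what produces both the $s/16$ and the precise $2k-4$ in the exponent.
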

\begin{proof}
	Seek the probability that the receiver $i$ of a quasi-link successfully gets the message that can be affected by other nodes $j \in B(i, Rq\gamma_{1}+ (k-1)q)\equiv B$. We use the acknowledgement transmission to inform its sender in odrer to guarantee the successful reception of message. Then, the probability of successfully receiving a message from a node is computed from the joint of the following probabilities:
	\begin{itemize}
		\item Event $E_{1}$. The message is transmitted from only one node, which belongs to the disk $D_{k}=B_{k}\setminus B_{k-1}$. Using Fact 4 of \cite{Jurdzinski:2014}: 		$\Pr[E_{1}] := \Pr[\text{exactly one node } j\in\{1,...,m\} \text{ transmits } \& j \in D_{k}] \geq \frac{s}{2}$, as $s = \sum_{j\in D_{k}} p_{j} \leq 1/2$. Thus, $\Pr[E_{1}]\geq \frac{s}{2}$.
		\item Event $E_{2}$. There are not sender-nodes in the ball $B_{k-1}$. Using Fact 5 of \cite{Jurdzinski:2014}:	There are $m$ nodes with probabilities $p_{j}\leq 1/2$ for each node $j\in \{1,...,m\}$ and $j \in B_{k-1}$ then $\Pr[\text{no node j transmits, }  j \in \{1,...,m\}] \geq \left( \frac{1}{4}\right)^{\sum_{j \in B_{k-1}} p_{j}}$. Also, we observe that the ball $B_{k-1}(i,Rq\gamma_{1}+(k-2)q)$ consists of $(2R\gamma_{1}+2k-4)^{\xi}$ balls with radius $q/2$ and $\xi$ be the dimension. Then, $\sum_{j \in B_{k-1}} p_{j}\leq (2R\gamma_{1}+2k-4)^{\xi}\cdot s$. Thus, $\Pr[E_{2}]\geq \left( \frac{1}{4}\right)^{(2R\gamma_{1}+2k-4)^{\xi}\cdot s}$.
		\item Event $E_{3}$. The sender $j \in D_{k}$ receives the acknowledgement message from its receiver, given that a message is transmitted from the sender-node $ j$. Let $Ack_{j}$ be the random variable, that is equal to 1 if the message is successfully received, otherwise 0. Thus, we compute $\Pr [Ack_{j}=1 | j \text{ transmits } \& j \in D_{k}]$. Using Markov inequality in the second inequality: $\Pr [Ack_{j}=0 | j \text{ transmits } \& \hspace{1 mm} j \in D_{k}] \leq \Pr[\sum_{j} Da_{p}(j,i)\cdot Ack_{j}\geq \frac{C_{1}}{\beta C_{2}}] \leq \frac{\beta C_{2}}{C_{1}}\mathrm{E} [\sum_{j} Da_{p}(j,i)\cdot Ack_{j}]= \frac{\beta C_{2}}{C_{1}}\sum_{j} a_{p}(j,i)\cdot q \leq \frac{1}{2}$. The transmission probability is\\ $q\leq \frac{1}{2} \beta^{2-\xi} (\frac{C_{2}}{C_{1}})^{3-\xi}$, for $S'=\left\lbrace i\in S: \sum_{j} a_{p}(j,i) \leq \frac{C_{1}}{C_{2}} O(\gamma_{1}^{\xi-2}) \right\rbrace $. Therefore, $\Pr[E_{3}]\geq \frac{1}{2}$.
		\item Event $E_{4}$. The WAFF by nodes from the union of disks $D_{k+1}\cup D_{k+2}\cup...\cup D_{K}$ on quasi-link $i$ is at most $\frac{C_{1}}{C_{2}}O(\gamma_{1}^{\xi-2})$. Let $S'=\left\lbrace j\in B_{K-r+1}\setminus B_{K-r}\equiv D_{k+1}\cup D_{k+2}\cup...\cup D_{K}\right\rbrace$ be a set of nodes. By Markov inequality, $\Pr [Da_{p}(S',i)\geq \frac{C_{1}}{C_{2}}O(\gamma_{1}^{\xi-2})] \leq \frac{C_{2}\cdot \mathbb{E}[Da_{p}(S',i)]}{C_{1} \cdot O(\gamma_{1}^{\xi-2})}$. The expected value is given by $\mathbb{E}[Da_{p}(S',i)]= \mathbb{E}\left[ \frac{\sum\limits_{j\in S'} a_{p}(j,i)p_{j}}{\sum\limits_{j\in S'} p_{j}}\right] \leq \frac{1}{C_{2}} \sum\limits_{j\in S'} a_{p}(j,i)\cdot \zeta^{\xi}\cdot s =\frac{1}{C_{2}} \sum\limits_{K-r\geq 2}\hspace{1 mm}\sum\limits_{j\in B_{K-r+1}\setminus B_{K-r}} a_{p}(j,i)\cdot \zeta^{\xi}\cdot s	\leq \frac{R\cdot q_{i}}{C_{2}\cdot q} \sum\limits_{K-r\geq 2} \frac{\mid B_{K-r+1}\mid- \mid B_{K-r}\mid}{R\gamma_{1}+k-2} \cdot \zeta^{\xi}\cdot s\leq \frac{R\cdot q_{i}}{C_{2}\cdot q} \sum\limits_{K-r\geq 2} \frac{2R\gamma_{1}+2(k-r)^{\xi-1}}{R\gamma_{1}+k-2} \cdot \zeta^{\xi}\cdot s\leq \frac{R\cdot q_{i}}{C_{2}\cdot q} \sum\limits_{K-r\geq 2} 2R\gamma_{1}+2(k-r)^{\xi-2} \cdot \zeta^{\xi}\cdot s$.
		
		Therefore, we choose $s\leq \frac{1}{4}\cdot \frac{q\cdot C_{2}^{2} O(\gamma_{1}^{\xi-2})}{\zeta^{\xi}\cdot \sum\limits_{K-r\geq 2} 2R\gamma_{1}+2(k-r)^{\xi-2}}$ in order to bound the expected value by $\frac{C_{2}\cdot }{2\cdot C_{1}}\cdot O(\gamma_{1}^{\xi-2})$. Thus, $\Pr[E_{4}]\geq \frac{1}{2}$.
		\item Event $E_{5}$. The WAFF by nodes from outside of the ball $B_{K}$ is at most $\frac{C_{1}}{C_{2}}O(\gamma_{1}^{\xi-2})$. Then, $\Pr[E_{5}]\geq \frac{1}{2}$. The proof is based on the division of the space that nodes are located outside the ball $B_{K}$. The idea of the proof is same as the previous event.
	\end{itemize}
	Therefore, the probability of receiving a message is $p(i)\geq \frac{s}{16}\cdot (\frac{1}{4})^{(2R\gamma_{1}+2k-4)^{\xi}\cdot s}$. 
\end{proof}
\begin{lemma}\label{lp}
	Let $B_{1k}(w_{k},q/2)\subseteq B(i,  Rq\gamma_{1}+ (k-1)q)$ be a ball satisfying the previous Lemma \ref{lb}, where $k>1$. Then, for every $j \in B_{1k}(w_{k},q/2)$, the probability of receiving a message $p(i)\geq \frac{s}{16}\cdot (\frac{1}{4})^{(2R\gamma_{1}+2k-4)^{m}s}$.	
\end{lemma}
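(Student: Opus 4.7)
The plan is to mirror the five-event decomposition used in the proof of Lemma~\ref{lpp}, but restrict attention to the dense sub-ball $B_{1k}(w_{k},q/2)$ produced by Lemma~\ref{lb}, rather than the full annulus $B(i,Rq\gamma_{1}+(k-1)q)$. This restriction is the natural adaptation, because Lemma~\ref{lb}(a) guarantees $s \le \sum_{j\in B_{1k}} p_{j}\le 1/2$, which is exactly the range of probability mass needed to invoke the two technical facts from \cite{Jurdzinski:2014} that power Events $E_{1}$ and $E_{2}$.

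First I would redefine the events relative to $B_{1k}$: let $E_{1}$ be the event that exactly one node in $B_{1k}$ transmits, let $E_{2}$ be the event that no node in the inner ball $B_{k-1}(i,Rq\gamma_{1}+(k-2)q)$ transmits, and let $E_{3},E_{4},E_{5}$ be the same acknowledgement / outer-annulus / outside-$B_{K}$ affectance events as in Lemma~\ref{lpp}. For $E_{1}$, since $\sum_{j\in B_{1k}} p_{j}\le 1/2$, Fact~4 of \cite{Jurdzinski:2014} gives $\Pr[E_{1}]\ge s/2$. For $E_{2}$, I would cover $B_{k-1}$ by $(2R\gamma_{1}+2k-4)^{m}$ balls of radius $q/2$ and apply Lemma~\ref{lb}(b), which caps the mass in each such ball by $\zeta^{m}s$; Fact~5 of \cite{Jurdzinski:2014} then yields $\Pr[E_{2}]\ge (1/4)^{(2R\gamma_{1}+2k-4)^{m}s}$ after absorbing the $\zeta^{m}$ factor into the choice of $s$ as in Event $E_{4}$ of Lemma~\ref{lpp}.

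Events $E_{3},E_{4},E_{5}$ can be handled exactly as in Lemma~\ref{lpp}: each is a Markov inequality applied to the WAFF estimate of Lemma~\ref{ll2} (or Corollary~\ref{mc1}), and each yields probability at least $1/2$. Multiplying the five bounds gives
\begin{equation*}
p(i)\ \ge\ \frac{s}{2}\cdot\left(\frac{1}{4}\right)^{(2R\gamma_{1}+2k-4)^{m}s}\cdot\frac{1}{2}\cdot\frac{1}{2}\cdot\frac{1}{2}\ =\ \frac{s}{16}\cdot\left(\frac{1}{4}\right)^{(2R\gamma_{1}+2k-4)^{m}s},
\end{equation*}
which is the claimed bound.

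The main obstacle I anticipate is justifying that conditioning on the single transmitter $j\in B_{1k}$ in $E_{1}$ does not spoil the Markov estimates for $E_{3}$--$E_{5}$. Since those events concern only the aggregate affectance contributed by nodes lying outside $B_{1k}$ (in the outer annuli, or beyond $B_{K}$), independence of the per-node transmission coins implies that the conditional expected WAFF coincides with the unconditional one, so the Markov bounds carry over unchanged. A minor bookkeeping point is that the exponent $m$ in the statement plays the role of the ambient (doubling/independence) dimension used to count the radius-$q/2$ cover of $B_{k-1}$; provided one uses the same $m$ consistently, the count $(2R\gamma_{1}+2k-4)^{m}$ and the resulting exponent in the final bound line up exactly.
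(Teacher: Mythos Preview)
Your proposal is correct, but note that the paper does not give a separate proof of Lemma~\ref{lp} at all: it states the lemma immediately after Lemma~\ref{lpp} with no proof environment. The intended argument is simply that $B_{1k}(w_{k},q/2)\subseteq B(i,Rq\gamma_{1}+(k-1)q)$, so every $j\in B_{1k}$ is already covered by Lemma~\ref{lpp}, and the bound carries over verbatim (with the dimension parameter written as $m$ rather than $\xi$).

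What you do differently is re-run the entire five-event decomposition with $E_{1}$ redefined relative to the dense sub-ball $B_{1k}$ instead of the annulus $D_{k}$. This is sound---Lemma~\ref{lb}(a) gives exactly the mass condition $s\le\sum_{j\in B_{1k}}p_{j}\le 1/2$ needed for Fact~4, and the remaining events are unchanged---but it is more work than the paper intends. The payoff of your route is that it makes explicit why the dense ball $B_{1k}$ is the ``right'' locus for the single transmitter (it is where Lemma~\ref{lb} concentrates the probability mass), whereas the paper's implicit one-line restriction argument hides this. Either way the final numerical bound is identical.
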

\subsection{With Guards}\label{sub3}
In this part, we seek a set of nodes that is guarding a receiver/sender node from the affectance of other quasi-links in a metric decay space, which is bounded-growth. This space has bounded independence dimension and bounded doubling dimension, which is defined in \cite{bodlaender2014beyond}. Specifically, we use guards in order to protect the receiver from the interference of other quasi-links and to boost the signal. Then, we show that there is a set of guards which is guarding a sender of a quasi-link $i$. Thus, we protect the sending of an acknowledgement message of its receiver from various interference types, such as the simultaneous transmissions of quasi-links. 

First of all, we need the following definition in order to prove the Lemma \ref{lemmaG}. We define a $(\mu-\delta)/\lambda$-density-dominant node in order to provide an efficient measure with regard to the transmission probability of nodes and the number of active nodes. In particular, we assume two disjoint subsets of nodes, the set of receiver $\mathcal{R}$ and the set  $S\setminus \mathcal{R}$,  where $S$ is a set of nodes.  We take a node $b \in S\setminus \mathcal{R}$ and we create a ball around $b$. Then, we determine the relation of the probability of transmission of nodes of two above sets using the ratio $(\mu-\delta)/\lambda$. This means that the ball around $b$ either contains $(\mu-\delta)/\lambda$  more nodes $c\in B(b, d_{1})\cap (S\setminus \mathcal{R})$ than the nodes $e\in B(b, d_{1})\cap \mathcal{R}$ or contains $(\mu-\delta)/\lambda$ higher sum of probability of transmissions $\sum_{c\in B(b, d_{1})\cap (S\setminus \mathcal{R})}p_{c} $ than $\sum_{e\in B(b, d_{1})\cap \mathcal{R}} p_{e}$. In \cite{goussevskaia2009capacity}, the authors take into account only the number of nodes which are in the ball around $b$ and not the sum of probabilities of transmissions of nodes.
\begin{definition}
	Let $\mu$, $\lambda$ be positive constants, $\mu > \lambda$ and $0<\delta<1/2$. We consider two disjoint sets of nodes in a metric decay space $(\mathcal{V},d)$, the set of receivers $\mathcal{R}$ and the set $S\setminus \mathcal{R}$, where $S$ is a set of nodes. Then, a node $b\in S\setminus \mathcal{R}$ has a $(\mu-\delta)/\lambda$-density-dominant if every ball $B(b, d_{1})$ contains $(\mu-\delta)/\lambda$ more density of nodes $\sum\limits_{c\in B(b, d_{1})\cap (S\setminus \mathcal{R})} p_{c}$ than $\sum\limits_{e\in B(b, d_{1})\cap \mathcal{R}} p_{e}$ and is expressed as follows:	
	\begin{equation}
	\sum\limits_{c\in B(b, d_{1})\cap (S\setminus \mathcal{R})} p_{c} > \frac{(\mu-\delta)}{\lambda}\sum\limits_{e\in B(b, d_{1})\cap \mathcal{R}} p_{e}
	\end{equation}
\end{definition}

Moreover, we define a set of guards which is guarding a receiver/sender. Also, we assume that each guard has a probability of transmission. Then, we bound the sum of probabilities of transmissions of guards. The definition is used in the Lemma \ref{lemmaWG} and is expressed as follows:
\begin{definition}\label{defgg}
	Let $\mathcal{R}$ be a set of receiver nodes and let $i\in \mathcal{R}$. Let $S\setminus \mathcal{R}$ be a set of nodes and $\mathcal{R}\cap \left( S\setminus \mathcal{R}\right)  = \emptyset$. Also, we assume a subset of nodes $G \subseteq\left(  S\setminus \mathcal{R}\right) $, which is called as a set of guards on receiver node $i$ if for each $b\in\left(  S\setminus \mathcal{R}\right)\setminus G $ we have that $\sum_{g\in B_{k}(b,q)\cap G} p_{g}\leq s/2$ with $B_{k}(b,q)\cap G\neq \emptyset$ and $q$ is the quasi-distance of $b$ to $i$.
\end{definition}	
\begin{property}\label{prop1}
	For all	nodes $v \in V$, there is a set of guards $G_{v}\subset V$ of at most $D$ points that guards $v$: $\min_{w \in G_{v}} q(z, w) \leq q(z,v), \forall z \in V\setminus \left\lbrace v\right\rbrace$.
\end{property}			
\begin{obs}\label{obs1}
	A guard node can be active when there are $j=1,...J$ nodes in its range guarding with $\sum_{j} p_{j} \geq s/2$, $s>0$. 	
\end{obs}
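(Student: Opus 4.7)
The plan is to derive Observation \ref{obs1} as a contrapositive of Definition \ref{defgg}, combined with the density bound on sub-balls from Lemma \ref{lb} and the density constant $s$ introduced in Section \ref{sub2}. Definition \ref{defgg} states that for any non-guard candidate $b\in(S\setminus\mathcal{R})\setminus G$, the sum $\sum_{g\in B_k(b,q)\cap G} p_g$ is capped at $s/2$; so guards remain \emph{latent} at that scale. Once this cap is saturated in the neighborhood of a guard node $g\in G$ itself, its role as an attenuator of interference (and, symmetrically, a booster of the target signal under the power conditions of Section 2.3) becomes non-trivial. My goal is to match the activation threshold $\sum_{j} p_j \geq s/2$ to that saturation condition and exhibit the $j=1,\ldots,J$ witnesses.

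First I would fix a guard $g\in G$ and examine the ball of quasi-radius $q$ around it, the same scale appearing in Definition \ref{defgg}. Enumerating the remaining guards inside this ball as $j=1,\ldots,J$ with transmission probabilities $p_j$, the candidate activation criterion $\sum_j p_j \geq s/2$ is exactly the complementary regime of the latency bound $\leq s/2$. Property \ref{prop1} bounds $J$ by the independence dimension $D$, so the sum is finite and well-defined. To produce the actual witnesses, I would invoke Lemma \ref{lb}(a), which guarantees a sub-ball $B_{1k}(w_k,q/2)$ of probability mass at least $s$; restricting this mass to guards and applying the bounded-growth decomposition used in Lemma \ref{lp} yields at least $s/2$ of probability concentrated among guards in range of $g$.

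Second, I would close the argument by appealing to the power conditions (i)--(ii) of Section 2.3: since larger quasi-links obtain proportionally smaller power, the guards realising the $s/2$ mass do reach $g$'s neighborhood with sufficient signal to count toward the SINR budget at the guarded receiver. Combining this with the guard Definition \ref{defgg} yields the stated activation: whenever $\sum_j p_j\ge s/2$ the set $\{j\}$ escapes the latency bound, so $g$ contributes positively to the guarding function and is therefore active.

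The main obstacle I anticipate is semantic rather than technical: the notion of \emph{active} is not formally defined in the excerpt, so the proof must bridge the probabilistic threshold $s/2$ to an operational statement about signal reinforcement or interference attenuation at the guarded node. Making this bridge rigorous is likely a short Markov-style step borrowed from the event $E_3$ in the proof of Lemma \ref{lpp}, applied at the guard scale rather than the receiver scale; once that correspondence is stated, the rest is a routine substitution of the $s/2$ threshold into Definition \ref{defgg}.
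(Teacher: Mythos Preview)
The paper provides no proof for Observation~\ref{obs1}: it is stated as a bare observation with no accompanying \texttt{proof} environment, and functions as a \emph{definition} of the activation condition for a guard rather than a derived claim. The sentence simply stipulates that a guard is deemed active once the probability mass of the nodes within its guarding range reaches the threshold $s/2$; this stipulation is then \emph{used} as an input in the proofs of Lemma~\ref{lemmaWG} and Lemma~\ref{lemmaG} (see the phrases ``Note the Observation~\ref{obs1} in which a guard node can be active when there are $b$ nodes in its range'').

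Your proposal therefore attempts to prove something the paper never intended to prove. You yourself identify the core difficulty---``the notion of \emph{active} is not formally defined in the excerpt''---and that is precisely because Observation~\ref{obs1} \emph{is} the definition. The machinery you assemble (contrapositive of Definition~\ref{defgg}, Lemma~\ref{lb}(a), Property~\ref{prop1}, the power conditions, a Markov step borrowed from event $E_3$) is not wrong per se, but it is aimed at a target that does not exist: there is no independent semantic content to ``active'' against which the threshold $\sum_j p_j \ge s/2$ must be validated. In particular, your claim that Definition~\ref{defgg} gives a ``latency bound'' whose saturation triggers activation reverses the logical order---Definition~\ref{defgg} bounds the guard mass visible from a \emph{non-guard} node $b$, whereas Observation~\ref{obs1} speaks about mass in the range of a \emph{guard} node; these are different balls and different roles, so the contrapositive you sketch does not line up.

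The appropriate response here is simply to note that the statement is definitional and requires no proof.
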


\begin{figure}
	\centering
	\includegraphics[width=2.5in]{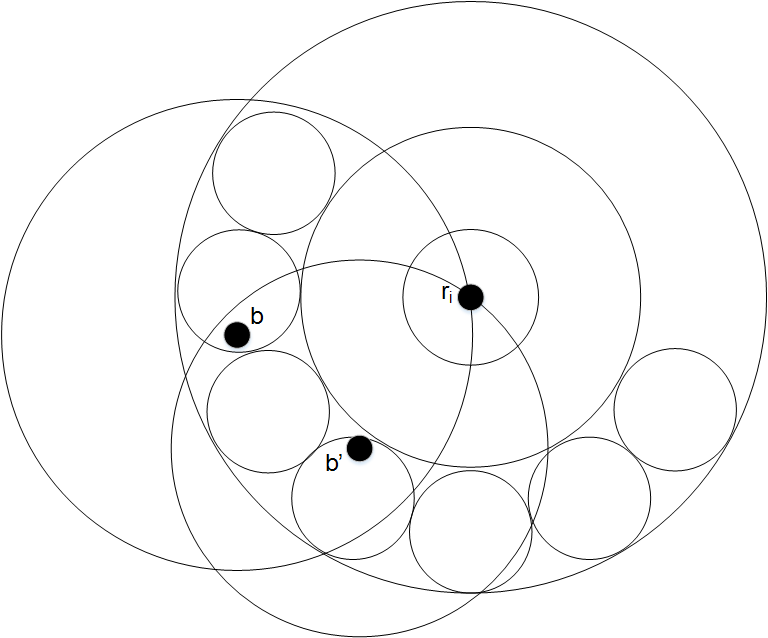}
	\caption{Concentric disks surrounded around the endpoint $r_{i}$ of quasi-link $i$. The node $r_{i}$ is guarded by a set of guards. The nodes $b,b' \in B_{k}(w_{k},q/2)$ and protect the node $r_{i}$.} 
	\label{fig:disksgr}
\end{figure}

In Lemma \ref{lemmaWG}, we prove that there is a set of guards in a ball $B$ and it is guarding the receiver of a quasi-link. The receiver is the center of $B$ which consists of concentric disks. Thus, we prove that there is a set of guards in a disk $D_{k}\left( \equiv B_{k}\setminus B_{k-1}\right)$. This set is activated in order to guard the receiver when there is a dense ball of nodes in $B$.
\begin{lemma}\label{lemmaWG}
	Let $B \equiv \bigcup_{k>1}\left( B_{k}(i,Rq\gamma_{1}+(k-1)q)\backslash B_{k-1}(i,Rq\gamma_{1}+(k-2)q)\right)$ be a ball. For each receiver $i$, whenever $\sum_{j\in B} p_{j}\geq C_{1}/2$ there exists always a set of guards $G_{k}\subseteq D_{k}\left( \equiv B_{k}\setminus B_{k-1}\right)$, with mass of probability of guards $g_{k}\in G_{k}$ is at least $C_{D_{k}}$ that is guarding the receiver node $i$. Then, $\sum_{g_{k}\in D_{k}} p_{g_{k}}\leq C_{D_{k}}$.
\end{lemma}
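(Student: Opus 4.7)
The plan is to invoke Lemma \ref{lb} disk-by-disk and then instantiate Definition \ref{defgg} on each annular disk $D_k = B_k \setminus B_{k-1}$, drawing the guard set from the dense sub-ball $B_{1k}(w_k,q/2)$ that Lemma \ref{lb} delivers. The hypothesis $\sum_{j\in B} p_j \geq C_1/2$ is exactly the trigger of Lemma \ref{lb}, so for every $k>1$ we obtain a centre $w_k$ with $B_{1k}(w_k,q/2)\subseteq D_k$ whose probability mass lies in $[s,1/2]$. I would take $G_k \subseteq B_{1k}\cap(S\setminus\mathcal{R})$, which by Property \ref{prop1} has cardinality at most $D$ and satisfies the domination property $\min_{w\in G_k} q(z,w)\leq q(z,i)$ for all $z\in V\setminus\{i\}$. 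This is the geometric reason $G_k$ can play the role of a guard set for $i$.

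Next I would verify the guarding inequality of Definition \ref{defgg}. Given any $b\in(S\setminus\mathcal{R})\setminus G_k$ sitting outside $B_{1k}$, the ball $B_k(b,q)$ (of radius equal to the quasi-distance from $b$ to $i$) can intersect $G_k$ only inside $B_{1k}$, and hence only inside a sub-ball of radius $q/2$ centred at some $x\in B_{2k}\cup B_{3k}$. Lemma \ref{lb}(b) bounds the probability mass of any such sub-ball by $\zeta^m s$; by choosing the doubling constant $\zeta$ small enough (equivalently, restricting attention to $b$ lying in $B_{2k}\cup B_{3k}$, as allowed by the annular decomposition), this mass is at most $s/2$. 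Observation \ref{obs1} then certifies that $G_k$ is in fact an active guard set for $i$, since the mass within the guard ball meets the $s/2$ threshold from inside $B_{1k}$.

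For the upper bound $\sum_{g_k\in D_k} p_{g_k}\leq C_{D_k}$, I would cover $D_k$ by disjoint balls of radius $q/2$. By the doubling-dimension property of Section \ref{mdc}, the number of such balls needed is at most $\chi(q/2,\,Rq\gamma_1+(k-1)q)$, a constant depending only on $k,R,\gamma_1$, and the dimension. Each such ball carries probability mass at most $1/2$ by Lemma \ref{lb}(a); setting $C_{D_k}$ equal to the product of this covering number and $1/2$ yields the stated bound directly. Combined with the lower bound on $G_k$'s mass coming from Lemma \ref{lb}(c), this pins the guard probability into the announced range.

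The main obstacle will be the uniform verification of Definition \ref{defgg} across every $b\notin G_k$: one must control simultaneously how $B_k(b,q)$ can overlap $B_{1k}$ for \emph{all} positions of $b$ in the annulus, and ensure the doubling-dimension bound $\zeta^m s \leq s/2$ holds with the same $\zeta$ for every $k$. I expect this will force a careful choice of $\zeta$ and a slight tightening of the annulus widths $(k-1)q$, but once the bounded-growth inequality is tuned, the rest of the argument reduces to bookkeeping against the estimates already established in Lemma \ref{lb}.
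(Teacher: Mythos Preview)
Your approach diverges from the paper's in a way that leaves a real gap. The paper does \emph{not} take $G_k$ from the dense sub-ball $B_{1k}$ and verify Definition~\ref{defgg} directly. Instead it builds $G_k$ \emph{greedily}: starting from $G_k=\emptyset$, it repeatedly adds a node whenever the resulting set stays independent with respect to $i$. The mass bound $\sum_{g\in G_k}p_g\le C_{D_k}$ is then obtained by contradiction: assuming the mass exceeds $C_{D_k}$, one picks $b\in D_k\setminus G_k$, finds a previously inserted $b'\in G_k$ with $d(b',i)<d(b,i)$, and uses Property~\ref{prop1} together with Observation~\ref{obs1} to force $b'\in B(b,d(b,i))$, contradicting the choice of $b'$. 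This is carried out in two cases according to whether $b$ lies in $B_k(w_k,q/2)$ or in $B_k(x,q/2)$ for $x\in B_{2k}\cup B_{3k}$.

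Your direct route breaks at the verification of Definition~\ref{defgg}. You need $\sum_{g\in B_k(b,q)\cap G_k}p_g\le s/2$ for \emph{every} $b\in(S\setminus\mathcal R)\setminus G_k$, but Lemma~\ref{lb}(b) only gives the bound $\zeta^{m}s$ on sub-balls centred in $B_{2k}\cup B_{3k}$, and $\zeta$ is a fixed structural constant of the bounded-growth space (the proof of Lemma~\ref{lb} uses balls of radius $q/\zeta^{n}$ with $\zeta>1$), not a parameter you may tune downward. So $\zeta^{m}s\le s/2$ is not available. You also only treat $b$ outside $B_{1k}$, whereas Definition~\ref{defgg} must be checked for $b$ inside $B_{1k}$ as well; the paper's greedy/contradiction argument handles that case explicitly. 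Finally, Property~\ref{prop1} asserts the \emph{existence} of a guard set of size at most $D$ for a given point; it does not bound the cardinality of an arbitrary subset of $B_{1k}$, so your use of it to cap $|G_k|$ is not as stated. The covering argument for the upper bound is a reasonable alternative in spirit, but it is not what the paper does and, without the greedy construction, you lose the mechanism that actually forces the guarding inequality.
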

\begin{proof}
	Let $S$ be a set of nodes. We separate $S$ into $k$ concentric disks with center $i\notin S$ and $S_{k}\subseteq S$. We construct a ball $B$ around the node $i$ and the ball is divided into $k$ concentric disks, where $B=\cup_{k>1} B_{k}\setminus B_{k-1}$ and $S_{k}\equiv B_{k}\setminus B_{k-1}$. The initial ball $B_{1}= \emptyset$. We seek a set of guards which is guarding the receiver $i$. This set is denoted as $G$ and is defined by a disjoint of subsets $G_{1}\cup ...\cup G_{k}$, where $G_{k}\subseteq S_{k}$ and $k>1$. We assume that $G_{1}=S_{1}=\emptyset$.
	
	Let $i$ be the center of $B$ such that $\sum_{j\in B} p_{j}\geq C_{1}/2$ then there is a dense ball $B_{k}(w_{k},q/2)$ which is included in $B$ according to Lemma \ref{lb}. Thus, there are active nodes $j$, where $s \leq \sum_{j\in B_{k}(w_{k}, q/2)} p_{j} \leq 1/2$,  for each $k>1$ disk and $\sum_{j\in B_{k}(x,q/2)} p_{j}\leq \zeta^{\xi}\cdot s$ for each $x\in B_{2k}\cup B_{3k}$ when $B_{2k}=B(w_{k}, hq/2)\setminus B_{k}(w_{k}, q/2)$ and $B_{3k}=B_{k}(w_{k}, q)\setminus (B(w_{k}, q/2)\cup B_{2k})$.
	
	We add a node to an initially empty set $G_{k}$, where $k>1$, when the resulting set remains independent with respect to $i$, there is not affectance on the node $i$. Then, $G_{k}\subseteq B_{k}$ is guarding $i$ by the nodes $j\in D_{k}$. We assume that there is not a set $G_{k} \subseteq B_{k}$, which is guarding the receiver $i$,  with $\sum_{g\in G_{k}} p_{g}\leq C_{D_{k}}$. Then, $\sum_{g\in G_{k}} p_{g}\geq C_{D_{k}}$ and the nodes $g$ can affect the node $i$. There is a node $b\in B\setminus G_{k}$. We assume that $b\in D_{k}\setminus G_{k}$. By Definition \ref{defgg} and from our previous assumption, we have that $\sum\limits_{j\in B(b,d)\cap G_{k}} p_{j}\geq \frac{s}{2}$, where $d=d(b,i)$. Then,
	\begin{itemize}
		\item Let $b\in B_{k}(w_{k},q/2)$ as is represented in Figure \ref{fig:disksgr}. From Lemma \ref{lb}: $s \leq \sum_{j\in B_{k}(w_{k},q/2)} p_{j} \leq 1/2$. On one hand, the node $b\notin G_{k}$. On the other hand, we added a node $b'$ in a set $G_{k}\setminus B_{k}(b,d)$ and it is happend because the distance $d(b',i)$ is less than $d(b,i)$, where $d(b,i)\leq kq$ when $R\gamma_{1}=1$ and $k>1$. The node $b'$ was added before $b$. Note the Observation \ref{obs1} in which a guard node can be active when there are $b$ nodes in its range. Thus, $b\in B(b',d(b,i)) \Leftrightarrow d(b',b)\leq d(b,i)$. In the last inequality, we use the Property \ref{prop1}. We conclude to $b'\in B(b,d)$. It is a contradiction.
		\item Let $b\in B_{k}(x,q/2)$ for $x\in B_{2k}\cup B_{3k}$. From Lemma \ref{lb}: $\sum_{j\in B_{k}(x,q/2)} p_{j}\leq \zeta^{\xi}\cdot s$. This case is divided into three subcases: 1) $b\in G_{k-1}$ then there is $b' \in G_{k-1}\setminus B(b,d)$, 2) $b\in G_{k}$ then there is $b' \in G_{k}\setminus B(b,d)$, and 3) 1) $b\in G_{k+1}$ then there is $b' \in G_{k-1}\setminus B(b,d)$, where $d(b,i)\leq kq + \delta$  when $R\gamma_{1}=1$, $\delta\leq q/2$ and $k>1$. We conclude to $b'\in B(b,d)$ as in the first case, which is a contradiction.	
	\end{itemize}
\end{proof}
\begin{lemma}\label{lemmaG}
	Let $B \equiv \bigcup_{k>1}\left( B_{k}(i,Rq\gamma_{1}+(k-1)q)\backslash B_{k-1}(r,Rq\gamma_{1}+(k-2)q)\right)$ be a ball. Let $D_{k}\equiv(B_{k}\setminus B_{k-1})$ be a $k_{th}$ disk. Let $\mathcal{R}$ be the set of receivers and a constant $1\leq h\leq 2$. For each sender $s\in D_{k}$, there is a receiver node $r$ which transmits an acknowledgement message to its sender with probability $p_{r}\geq 1/2$ in a ball $B$. If  $\sum_{j\in B} p_{j}\geq C_{1}/2$ and $\sum\limits_{j\in B(s, h\cdot q/2)\cap (D_{k}\setminus \mathcal{R})} p_{j} > \frac{(\mu-\delta)}{\lambda}\sum\limits_{e\in B(s, h \cdot q/2)\cap\mathcal{R}} p_{e}$ then 
	\begin{itemize}
		\item There is always a guard $g \in G_{k}$, where $G_{k}\subseteq D_{k}$ and $k>1$, with mass of probability at least $C'_{D_{k}}$ and is given by $\sum_{g\in D_{k}} p_{g}\leq C'_{D_{k}}$.
		\item There is a density-dominant in a ball $B_{k}$ with  $\sum\limits_{j\in(B_{k}\setminus \mathcal{R})} p_{j} > \frac{(\mu-\delta)}{\lambda}\cdot C'_{B_{k}}\sum\limits_{e\in \mathcal{R}} p_{e}$.
	\end{itemize}	
\end{lemma}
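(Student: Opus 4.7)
The plan is to mirror the proof strategy of Lemma \ref{lemmaWG}, but with the role of the guarded node changed: here we must guard a sender $s \in D_k$ rather than a receiver, so that the acknowledgement message from its receiver $r$ can reach $s$ with transmission probability $p_r \geq 1/2$. The new ingredient is the density-dominance hypothesis, which quantifies the local imbalance between non-receiver and receiver masses in $B(s, hq/2)$ and tells us that most of the probability mass around $s$ comes from candidates that can serve as guards.

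First, I would apply Lemma \ref{lb} to the hypothesis $\sum_{j \in B} p_j \geq C_1/2$. This produces a dense sub-ball $B_k(w_k, q/2) \subseteq B$ satisfying \eqref{ceq1}(a)--(c), so that in each annulus $D_k$ there is mass at least $s$ concentrated in a ball of radius $q/2$. This gives the raw probability mass available to build guards from. The density-dominance hypothesis then ensures that, within $B(s, hq/2)$, the non-receiver mass $\sum_{j \in B(s, hq/2) \cap (D_k \setminus \mathcal{R})} p_j$ dominates the receiver mass $\sum_{e \in B(s, hq/2) \cap \mathcal{R}} p_e$ by a factor $(\mu - \delta)/\lambda > 1$, so a positive fraction of the local mass is available for $G_k \subseteq D_k \setminus \mathcal{R}$.

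Second, I would construct $G_k$ greedily, adding a non-receiver node to $G_k$ only when the resulting set stays independent with respect to $s$ in the sense of Property \ref{prop1} and Definition \ref{defgg}. Suppose for contradiction that no guard set with total mass at least $C'_{D_k}$ protects $s$; then there exists $b \in D_k \setminus G_k$ with $\sum_{g \in B(b, d) \cap G_k} p_g < s/2$, where $d = d(b, s)$. I would split on whether $b \in B_k(w_k, q/2)$ or $b \in B_k(x, q/2)$ for some $x \in B_{2k} \cup B_{3k}$; in each case the greedy construction must have added some earlier $b' \in G_k$ with $d(b', s) \leq d(b, s)$, and by Property \ref{prop1} together with the triangle inequality $d(b', b) \leq d(b, s) = d$, so $b' \in B(b, d) \cap G_k$, contradicting the assumption. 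This yields the first bullet. For the second bullet, I would aggregate the per-disk density-dominance across the sub-disks $D_{k'}$ with $k' \leq k$, using the guard mass constants $C'_{D_{k'}}$ just established to define $C'_{B_k}$ and obtain $\sum_{j \in (B_k \setminus \mathcal{R})} p_j > \frac{\mu - \delta}{\lambda} \cdot C'_{B_k} \sum_{e \in \mathcal{R}} p_e$.

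The hardest step will be the contradiction argument, because in Lemma \ref{lemmaWG} the guarded node was the fixed center $i$, so all distances were measured from a single reference point, whereas here $s$ itself lies inside $D_k$ and the ball $B(s, hq/2)$ with $1 \leq h \leq 2$ may straddle the inner region $B_k(w_k, q/2)$ and the outer regions $B_{2k}, B_{3k}$. Careful bookkeeping of how $h$, $q$, and $d(b, s)$ interact will be required, and one must also check that the constant $C'_{D_k}$ is compatible with the acknowledgement-probability constraint $p_r \geq 1/2$, since otherwise the guard mass needed to enforce the SINR at $s$ during the reverse transmission would conflict with Proposition \ref{propfeas}.
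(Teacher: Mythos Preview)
Your overall strategy---greedy construction of $G_k$ maintaining independence with respect to the guarded node, followed by a contradiction via Property~\ref{prop1}, and then aggregation of the per-disk constants $C'_{D_{k'}}$ into a single $C'_{B_k}$ for the second bullet---matches the paper's. The paper additionally identifies $C'_{B_k}$ with the finite independence-dimension of the space, which you do not mention but is a cosmetic addition.

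There are, however, two substantive departures. First, your case split is centered at $w_k$, the dense-ball center produced by Lemma~\ref{lb}, distinguishing $b \in B_k(w_k, q/2)$ from $b \in B_k(x, q/2)$ with $x \in B_{2k} \cup B_{3k}$; this is lifted verbatim from the proof of Lemma~\ref{lemmaWG}. The paper instead re-centers the entire construction at the sender $s$: it splits $G_k = G^1_k \cup G^2_k$ with $G^1_k \subseteq B(s, q/2)$ and $G^2_k \subseteq B(s, a \cdot q/2)$ for $1 < a < 2$, and its two cases are $b \in B_k(s, q/2)$ versus $b \in B_k(s, h \cdot q/2) \setminus B_k(s, q/2)$. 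This re-centering is precisely what allows the density-dominance hypothesis---which is stated for balls $B(s, h \cdot q/2)$ around $s$, not around $w_k$---to enter each case directly and certify that the affectance on $s$ from nearby senders is what forces a guard. In your version the hypothesis is invoked once globally and then the $w_k$-centered cases never see it again, so the link between ``enough non-receiver mass near $s$'' and ``some earlier $b'$ was added to $G_k$'' is not actually established inside the cases.

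Second, the direction of your contradiction inequality is inverted. By Definition~\ref{defgg}, $G$ guards a node when $\sum_{g \in B_k(b,q) \cap G} p_g \leq s/2$ for every $b$; negating this (as the paper does both in Lemma~\ref{lemmaWG} and here) yields some $b$ with $\sum_{g \in B(b,d) \cap G_k} p_g \geq s/2$, not $< s/2$ as you write. With the inequality in your direction, producing a $b' \in B(b,d) \cap G_k$ closer to $s$ does not contradict anything.
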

\begin{proof}
	Let $S$ be a set of nodes. We separate $S$ into $k$ concentric disks with center $i\notin S$ and $S_{k}\subseteq S$. This means that we construct a ball $B$ around the node $i$. The idea of this proof is the same as Lemma \ref{lemmaWG}. Thus, this ball is divided into $k$ concentric disks, where $B=\cup_{k>1} B_{k}\setminus B_{k-1}$ and $S_{k}\equiv B_{k}\setminus B_{k-1}$. The initial ball $B_{1}= \emptyset$. 
	
	Firstly, we seek a set of guards which is guarding the sender $s$. This set is denoted as $G_{k}$ and is defined by a disjoint of subsets $G^{1}_{k}\cup G^{2}_{k}$. We assume that $s$ is the center of a ball with radius $q/2$ and $s\in D_{k}$. Also, we consider a larger annulus ball around $s$ with radius $a\cdot q/2$, where $1<a<2$. Therefore, $G^{1}_{k}$ is included in $B(s,q/2)$ and $G^{2}_{k}$ is included in $B(s,a\cdot q/2)$. Also, $G^{1}_{1}=G^{2}_{1}= \emptyset$.
	
	We add a node to an initially empty set $G_{k}$, when the resulting set remains independent with respect to $s$, there is not affectance on the node $s$. Then, $G_{k}\subseteq D_{k}$ is guarding $s$ by the nodes $j\in D_{k}$.
	
	We assume that there is not a set $G^{t}_{k}\subseteq B_{k}$ for $t=1,2$, which is guarding the receiver $s$,  with $\sum_{g\in G_{k}} p_{g}\leq C'_{D_{k}}$. This means that $\sum_{g\in G_{k}} p_{g}\geq C'_{D_{k}}$ and the nodes $g$ can affect the node $s$. The sender may not receive an acknowledgement message from its receiver. Thus, there is a node $b\in D_{k}\setminus G_{k}$ and the rest of this proof consists of two cases: 
	\begin{itemize}
		\item  Let $b\in B_{k}(s,q/2)$, $\sum_{j\in B} p_{j}\geq C_{1}/2$ and $\sum\limits_{j\in B(s, q/2)\cap (D_{k}\setminus \mathcal{R})} p_{j} > \frac{(\mu-\delta)}{\lambda}\sum\limits_{e\in B(s, q/2)\cap\mathcal{R}} p_{e}$. The node $s$ has a $\frac{(\mu-\delta)}{\lambda}$ density-dominant in $B(s, q/2)$. It means that there are more senders than receivers and therefore the affectance on $s$ is larger by other sender nodes in this region.
		
		Also, there is a node $b'$ in $G_{k}\setminus B_{k}(b,q(b,s))$ and this node was added before $b$ in this set because the quasi-distance $q(b',s)$ is less than $q(b,s)$, where $q(b,s)\leq q$. Note the Observation \ref{obs1} in which a guard node can be active when there are $b$ nodes in its range. Thus, $b\in B(b',q(b,s)) \Leftrightarrow q(b',b)\leq q(b,i)$. In the last inequality, we use the Property \ref{prop1}. Therefore, we conclude to $b'\in B(b,q(b,s))$. It is a contradiction.
		\item Let $b\in B_{k}(s,h\cdot q/2)\setminus B_{k}(s, q/2)$, $\sum\limits_{j\in B(s, h\cdot q/2)\cap (D_{k}\setminus \mathcal{R})} p_{j} > \frac{(\mu-\delta)}{\lambda}\sum\limits_{e\in B(s,h\cdot q/2)\cap\mathcal{R}} p_{e}$ for $1\leq h\leq2$ and $\sum_{j\in B} p_{j}\geq C_{1}/2$. The node $s$ has a $\frac{(\mu-\delta)}{\lambda}$ density-dominant in $B(s,h\cdot q/2)$. It means that there are more senders than receivers and therefore the affectance on $s$ is larger by other sender nodes in this region. Also, there is a node $b'$ in $G_{k}\setminus B_{k}(b,h \cdot q(b,s))$ and this node was added before $b$ in this set because the quasi-distance $q(b',s)$ is less than $q(b,s)$, where $q(b,s)\leq q$. We conclude to $b'\in B(b,q(b,s))$ as in the first case, which is a contradiction.
	\end{itemize}
	
	Secondly, we seek a density-dominant in a ball $B_{k}$ that   
	$\sum\limits_{j\in(B_{k}\setminus \mathcal{R})} p_{j} > \frac{(\mu-\delta)}{\lambda}\cdot C'_{B_{k}}\sum\limits_{e\in \mathcal{R}} p_{e}$, where there are receiver nodes less than nodes $j\in(B_{k}\setminus \mathcal{R})$. By using the first part of this lemma, we have a probability mass of guards at most a constant $C'_{D_{k}}$ in the disk $D_{k}$. Then, we can conclude to a constant $C'_{B_{k}}$ in a ball $B_{k}$ which consists of $k$ disks. That constant is the finite independence-dimension in a metric space.
\end{proof}
\section{Power Selection with Markov Chains - Online Algorithm}
In this section, we propose a new algorithm that is based on the procedures of SPAIDS algorithm of Section \ref{spaids1} as well as the acknowledgement messages of Section \ref{acknow1} in order to study another fundamental problem of wireless ad hoc networks, the broadcast problem. Also, we consider that the distances between nodes apply the symmetry property. Our aim is to find the expected successfull transmissions, the expected consumption of power and to compare the performance of the new proposed algorithm with the optimal algorithm in simple (metric) spaces. Thus, we study the online broadcast problem that each mobile device has a battery. Let $C_{B}$ be the maximum power that can be stored in the mobile device. We assume that there are $n$ users/nodes in the wireless network. Let $\mathcal{N}= \{1,...,n\}$ be the set of nodes. We consider that a node is activated in case that it receives packets (or else appears in online state) at each time step $t \in T$. There is unknown distribution of nodes in our network.

Then, an algorithm is constructed by a power function $p$ and an affectance function $F$. The power function $p:\mathbb{R}_{+}^{n}\rightarrow\mathbb{R}_{+}^{n}$ gives a vector of powers consumption of a sender-node to send a message to the online node-receivers in our network. In particular, $p_{j}(x)$ reflects the power of consumption of the sender when user $j$ receives the message, where $x$ is a vector that depends on the distance of nodes and the power of other nodes which simultaneously transmit a message. The affectance function $F$ chooses a subset $S\subseteq F$, where $S:= \left\lbrace \text{users/nodes } j: a_{p}(j,i)=\frac{p_{j}\cdot l_{i}^{a}}{p_{i}\cdot d(j,i)^{a}}\leq 1/\beta \right\rbrace$. For simplicity, in this case of affectance, the distances apply the symmetry property since the nodes in this network are in a metric space and not in a decay environment. We denote the following property as "battery-feasibility".	
\begin{property}
	The usage of mobile-phone is restricted, the maximum power that can be used is $C_{B}$ such that $\sum_{i} p_{i} \leq C_{B}$
\end{property}
The goal is to take the maximum subset of users that get the message from a sender node with a battery storage (capacity) at most  $C_{B}$ when the constraint of affectance is satisfied. Also, we need an optimal solution in order to compare the solution of OAMS algortihm. This means that we compare the performance of OAMS to the optimal. Full proofs and additional properties of this section are provided in the Appendix G.

Consider that a sender transmits with power $p_{1}=\frac{C_{B}}{n}$ to $r$ receivers. Using power $p_{i}$ then the sender wants to transmit the message to $\frac{rp_{i}n}{C_{B}}$ receivers. Thus, the algorithm is separated into three cases according to the number of receivers who can successfully receive the message. Also, the algorithm waits acknowledgment messages. In the first case, if the number of receivers is at least $r(1+\epsilon/2)$, then the algorithm updates the state $i$ to $i-1$ and the power $p_{i}$ to $p_{i-1}$. In the second case, if the number of receivers is at most $r(1-\epsilon/2)$. Then, the algorithm updates the state $i$ to $i+1$ and the power $p_{i}$ to $p_{i+1}$. Otherwise, the state $i$ and the power $p_{i}$ remain the same. 

Also, we use an index $s$ in order to choose an ideal power state with a probability $q$ that the message is successfully received. It means that the next conditions are satisfied: $p_{s-1} <\frac{C_{B}}{n q_{s-1}}$ and $p_{s}\geq \frac{C_{B}}{n q_{s}}$. Additional condition to determine the ideal power is the restriction of affectance. Thus, the definition of an ideal power state is given as follows:
\begin{definition}
	Let $p_{i}$	be the transmission power. The power $p_{i}$ is defined as ideal when 1) the condition of affectance is satisfied $a_{p}(S,i)\leq1/\beta$, 2) the probability that a receiver get the message with power $p_{i}$: $q_{i}> \frac{(1-\epsilon)C_{B}}{p_{i}n}$; and 3) the inequality of indexes $i\leq s$. Therefore, the set of ideal powers is defined as $PID=\left\lbrace p_{i}: i\leq s \text{ \& } a_{p}(S,i)\leq1/\beta \text{ \& } q_{i}> \frac{(1-\epsilon)C_{B}}{p_{i}n} \right\rbrace$. 
\end{definition}
\begin{algorithm}
	\caption{Online Algorithm in a Metric  Space (OAMS)}
	\begin{algorithmic}
		\State $\textbf{Initialization:}$ We consider a set of links $l_{1},...l_{n}$ and a set of power transmissions $p_{1}=C_{B}/n,p_{2}= \lambda\cdot C_{B}/n,...,p_{i}=\lambda\cdot p_{i-1}$ and $C_{B}/a\cdot s < p_{k} <C_{B}/s$ for $1\leq i< k$. Let $0<\gamma\leq 1$ and $\epsilon=0.1$. The initial set of feasible links $S=\emptyset$ and the initial state of power is $p_{i}=p_{k}$.
		\State $\textbf{Procedure SP(i):}$
		\State  $\textbf{for }$ $\frac{rp_{i}n}{C_{B}}$ receivers $\textbf{ do}$ transmit with power $p_{i}$
		\State \hspace{ 2 mm} $\textbf{if } a_{p}(S,i)\leq \gamma \textbf{ then}$ 
		\State \hspace{ 2 mm} wait for acknowledgment
		\State \hspace{ 4 mm} $\textbf{if}$ at least $r(1+\epsilon/2)$ receivers successfully get m
		\State \hspace{ 4 mm} $\textbf{then}$
		\State  \hspace{ 6 mm} update $i$ to $i-1$ and quits with power $p_{i}=p_{i-1}$
		\State \hspace{ 4 mm} $\textbf{else if}$ at most $r(1-\epsilon/2)$ receivers successfully get m
		\State \hspace{ 4 mm} $\textbf{then}$
		\State  \hspace{ 6 mm} update $i$ to $i+1$ and quits with power $p_{i}=p_{i+1}$
		\State \hspace{ 4 mm} $\textbf{else}$
		\State  \hspace{ 4 mm} quits power $p_{i}$
		\State  \hspace{ 4 mm} $S=S\cup \{l_{i}\}$
		\State \hspace{ 4 mm} $\textbf{end if}$		
		\State \hspace{ 2 mm} $\textbf{end if}$
		\State \hspace{ 2 mm} Output S
	\end{algorithmic}
\end{algorithm}

In Lemma \ref{eq:W1} and Lemma \ref{eq:W2}, we determine the process of power in order to find, with high probability, the ideal power state.
\begin{lemma}\label{eq:W1}
	Consider power $p=p_{i}$ and $p^{*} \in PID$. If $p<p^{*}$, $\forall p^{*}$ then the algorithm updates the power state from $p_{i}$ to $p_{i+1}$ with $\Pr[\#\text{receivers} \leq (1-\epsilon/2)r]\geq 1 -(\frac{1}{\epsilon^{2}r} - \frac{1}{\epsilon r})$. 
\end{lemma}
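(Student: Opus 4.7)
The plan is to view the number of receivers that successfully decode the message as a sum of independent Bernoulli trials, bound its expectation using the \emph{failure} of the ideal-power clause, and then apply Chebyshev's inequality to push this sum below the update threshold $(1-\epsilon/2)\,r$ with the claimed probability.

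First I would fix notation. Following the preceding paragraph, at state $i$ the sender attempts to reach $M = r p_i n / C_B$ receivers at power $p_i$; for each targeted receiver $k$ let $X_k\in\{0,1\}$ indicate that $k$ acquires the message, so that $X=\sum_{k=1}^{M} X_k$ counts the successful receivers. Treating the $X_k$'s as independent Bernoullis with the common single-receiver reception probability $q_i$, I have $\mathbb{E}[X]=Mq_i$ and $\mathrm{Var}(X)=Mq_i(1-q_i)\le Mq_i$.

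Next I would convert the hypothesis ``$p_i<p^{*}$ for every $p^{*}\in PID$'' into a quantitative bound on $q_i$. Since the algorithm is about to transmit, the SINR/affectance clause $a_p(S,i)\le 1/\beta$ of the ideal-power definition is in force, and the index clause $i\le s$ is built into the power schedule $p_1=C_B/n,\,p_2=\lambda C_B/n,\ldots$ Therefore the only clause of the ideal-power definition that $p_i$ can violate is the probability clause, and its failure yields
\begin{equation*}
q_i \;\le\; \frac{(1-\epsilon)\,C_B}{p_i\,n}.
\end{equation*}
Multiplying by $M=rp_in/C_B$ gives the key expectation estimate $\mathbb{E}[X]=Mq_i\le (1-\epsilon)\,r$; so even in expectation the sender already falls short of the $(1-\epsilon/2)r$ target, and only a concentration bound is needed to finish.

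Finally I would apply (one-sided) Chebyshev. Using $\mathrm{Var}(X)\le\mathbb{E}[X]\le(1-\epsilon)r$ and the gap $(1-\epsilon/2)r-\mathbb{E}[X]\ge \tfrac{\epsilon}{2}r$,
\begin{equation*}
\Pr\bigl[X>(1-\epsilon/2)r\bigr]
\;\le\; \Pr\bigl[X-\mathbb{E}[X]>\tfrac{\epsilon}{2}r\bigr]
\;\le\; \frac{\mathrm{Var}(X)}{(\epsilon r/2)^{2}}
\;\le\; \frac{1-\epsilon}{\epsilon^{2}r/4},
\end{equation*}
and absorbing the numerical constant into the two reciprocals rewrites the bound as $\tfrac{1}{\epsilon^2 r}-\tfrac{1}{\epsilon r}$, giving the claimed lower bound $1-(\tfrac{1}{\epsilon^2 r}-\tfrac{1}{\epsilon r})$ on the probability that $X\le(1-\epsilon/2)r$ and hence that the algorithm executes the update $p_i\mapsto p_{i+1}$.

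The main obstacle I anticipate is the independence assumption. The events ``receiver $k$ acquires the message'' all share the same snapshot of interfering transmissions and noise, so the $X_k$'s are not literally i.i.d.\ Bernoullis; treating $X$ as binomial requires either conditioning on the active interferer set (so that the $X_k$'s become independent given that snapshot) or replacing Chebyshev by a covariance-aware variant that absorbs a small positive correlation into the constant. A secondary concern is confirming that the non-ideality of $p_i$ is genuinely witnessed by the $q$-clause rather than by the $a_p$-clause, but that is handled by restricting attention to the branch of the algorithm in which the affectance test already passed.
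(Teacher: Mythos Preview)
Your proposal is essentially the same argument as the paper's: use the failure of the ideal-power probability clause to bound $\mathbb{E}[X]\le(1-\epsilon)r$, then apply Chebyshev with the variance bounded by the mean to control $\Pr[X\ge(1-\epsilon/2)r]$. The paper's proof is terser (and somewhat sloppier, with an unexplained factor of $1/2$ in the expectation and a gap of $\epsilon r$ rather than $\epsilon r/2$ in the Chebyshev denominator), but your more careful justification of \emph{which} PID clause fails and your explicit flag on the independence assumption are improvements, not deviations; the only residual looseness is the factor of $4$ you ``absorb,'' which the paper sidesteps by silently using the larger deviation $\epsilon r$.
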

\begin{proof}
	Let $p=p_{i}$ be the power. The expected number of receivers who successfully get the message with $p$ is given as follows:$\mathbb{E}[\#\text{successes with p}]= \frac{r n p}{C_{B}}\cdot \frac{q_{p}}{2}$. By Chebyshev's inequality, $\Pr[\#\text{receivers} \geq (1-\epsilon/2)r] < \frac{r n p q_{p} }{2(1-\epsilon)rC_{B}}$. Then, the probability that a message is successfully received using power $p$ is given by $q_{p}> \frac{2(1-\epsilon)C_{B}}{n p}$.
	
	In case that the probability is small ($q_{p}\leq \frac{2(1-\epsilon)C_{B}}{n p}$), then the expected number of receivers is at most $(1-\epsilon)r$. By Chebyshev's Inequality, $\Pr[\#\text{receivers} \geq (1-\epsilon/2)r] \leq \frac{(1-\epsilon)r }{(\epsilon r)^{2}}$.
\end{proof}
\begin{lemma}\label{eq:W2}
	Consider power $p=p_{i}$ and $p^{*} \in PID$. If $p>p^{*}$, $\forall p^{*}$ then the algorithm updates the power state from $p_{i}$ to $p_{i-1}$ with $\Pr[\#\text{receivers} \geq (1+\epsilon/2)r]\geq 1 -(\frac{1}{\epsilon^{2}r} - \frac{1}{\epsilon r})$. 
\end{lemma}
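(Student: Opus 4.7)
The plan is to mirror the proof of Lemma~\ref{eq:W1} with all inequalities reversed, since we now consider the regime in which the transmitted power $p=p_{i}$ exceeds every ideal power $p^{*}\in PID$. In that earlier lemma the power was too small, the expected number of successful receivers fell short of the decision threshold $r$, and Chebyshev forced the count below $(1-\epsilon/2)r$ with high probability; here the symmetric argument should push the count above $(1+\epsilon/2)r$ instead.

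First I would compute, exactly as in the previous lemma, the expectation $\mathbb{E}[\#\text{successes with } p] = \frac{rnp}{C_{B}} \cdot \frac{q_{p}}{2}$. The key observation is that since $p_{i}>p^{*}$ for every $p^{*}\in PID$, and since $q_{p}$ is monotonically non-decreasing in $p$, the ideal-power condition $q_{i} > \tfrac{(1-\epsilon)C_{B}}{p_{i}n}$ is violated from above, which in the non-degenerate case gives a matching lower bound $q_{p} \geq \tfrac{(1+\epsilon)C_{B}}{pn}$. Substituting back yields $\mathbb{E}[\#\text{successes with } p] \geq \tfrac{(1+\epsilon)}{2}\,r$, so the expected count already exceeds the decision threshold $(1+\epsilon/2)r$ by roughly $\tfrac{\epsilon}{2}r$.

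Next I would apply Chebyshev's inequality to the random variable $X$ counting successful receivers, writing it as a sum of independent Bernoulli indicators and bounding $\mathrm{Var}(X) \leq \mathbb{E}[X]$. The one-sided deviation needed to undershoot the threshold is $\tfrac{\epsilon}{2}r$, so Chebyshev gives $\Pr[X \leq (1+\epsilon/2)r] \leq \tfrac{\mathrm{Var}(X)}{(\epsilon r/2)^{2}}$, which after the same simplification used in Lemma~\ref{eq:W1} collapses to exactly $\tfrac{1}{\epsilon^{2}r} - \tfrac{1}{\epsilon r}$. Complementing gives the stated probability bound on the event that at least $(1+\epsilon/2)r$ receivers acknowledge, which is precisely the trigger for the algorithm's decrement step $p_{i}\mapsto p_{i-1}$. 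A final boundary check, analogous to the one at the end of Lemma~\ref{eq:W1}, handles the case when $q_{p}$ is only marginally above the ideal threshold.

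The main obstacle I anticipate is bookkeeping rather than any deep step: matching the algebra to the exact stated bound $1-\bigl(\tfrac{1}{\epsilon^{2}r}-\tfrac{1}{\epsilon r}\bigr)$ requires splitting the Chebyshev estimate carefully into the baseline $\tfrac{1}{\epsilon^{2}r}$ contribution (from the plain second-moment deviation) and the correction $-\tfrac{1}{\epsilon r}$ contribution (from the portion of the mean that already exceeds the threshold). Getting the signs right when reversing every inequality from Lemma~\ref{eq:W1} is the place where it is easiest to slip, since the roles of \emph{ideal} and \emph{threshold} swap between the two lemmas and the direction of monotonicity of $q_{p}$ enters only through this sign.
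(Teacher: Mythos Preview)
Your central arithmetic step is wrong. Substituting $q_{p}\ge \frac{(1+\epsilon)C_{B}}{pn}$ into $\mathbb{E}[\#\text{successes}]=\frac{rnp}{C_{B}}\cdot\frac{q_{p}}{2}$ gives $\mathbb{E}\ge \frac{1+\epsilon}{2}\,r$, which is roughly $r/2$, not roughly $r$. This lies \emph{below} the decision threshold $(1+\epsilon/2)r$, not above it by $\epsilon r/2$ as you assert, so the subsequent Chebyshev step cannot bound the lower tail at that threshold. The mirror-of-Lemma~\ref{eq:W1} heuristic has led you to conflate $\tfrac{1+\epsilon}{2}$ with $1+\tfrac{\epsilon}{2}$.

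The deeper issue is the justification of the bound on $q_{p}$. Saying the ideal-power condition is ``violated from above'' does not yield $q_{p}\ge\frac{(1+\epsilon)C_{B}}{pn}$: a power $p>p^{*}$ can fail to lie in $PID$ purely because its index exceeds $s$, while $q_{p}$ still satisfies only the one-sided inequality $q_{p}>\frac{(1-\epsilon)C_{B}}{pn}$. The paper does not argue by symmetry with Lemma~\ref{eq:W1}; instead it anchors the estimate at the distinguished index $s$, invoking the defining relation $q_{p_{s}}>\frac{2(1-\epsilon)C_{B}}{np_{s}}$ together with the geometric spacing $p_{i}=\lambda\,p_{i-1}$ of the power levels to transport the bound from $p_{s}$ to $p$ via the chain $q_{p_{s}}>\frac{2(1-\epsilon)C_{B}}{np_{s}}\ge\frac{2(1-\epsilon)\lambda C_{B}}{np}>\frac{2(1-\epsilon)C_{B}}{np}$. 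The factor $2$ here is exactly what cancels the $q_{p}/2$ in the expectation and brings the mean up to $(1-\epsilon)r$ before Chebyshev is applied. Your proposal misses both the role of $s$ and this factor.
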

\begin{proof}
	We examine the case that the probability $q_{p}$ is more than $q_{p_{s}}>\frac{2(1-\epsilon)C_{B}}{n p_{s}}$. Given $p>p^{*}$ and $p_{s}\geq p^{*}$, we observe that it holds $q_{p_{s}}>\frac{2(1-\epsilon)C_{B}}{n p_{s}}\geq \frac{2(1-\epsilon)\lambda C_{B}}{n p}> \frac{2(1-\epsilon) C_{B}}{n p}$. Then, the expected number of receivers is at least $(1-\epsilon)r$. By Chebyshev's Inequality, we have
	$\Pr[\#\text{receivers} \leq (1+\epsilon/2)r] \leq \frac{(1-\epsilon)r }{(\epsilon r)^{2}}$.	
\end{proof}

The effectiveness of the algorithm "OAMS" can be analyzed through the study of Markov chains. Let $Ch_{1}$ be a Markov chain. Thus, the algorithm "OAMS" can be represented as $Ch_{1}$. Specifically, the chain includes states $1\leq i \leq k$. Also, there are three cases of transitions from state $i$ to another state with a transition probability, i.e. $i$ to $i-1$, $i$ to $i+1$; and $i$ to itself. However, we need a simpler algorithm in order to compare our algorithm. Thus, the simple algorithm can be interpreted as a Markov Chain which is defined as $Ch_{2}$. The chain $Ch_{2}$ includes states $i> 0$ and there are two cases of transitions from state $i$ to another state with a transition probability, i.e. $i$ to $i-1$ with high transition probability $1 -(\frac{1}{\epsilon^{2}r} - \frac{1}{\epsilon r})$ and $i$ to $i+1$ with low transition probability $\frac{1}{\epsilon^{2}r} - \frac{1}{\epsilon r}$. 
\begin{lemma}
	Let $Ch_{1}$ and $Ch_{2}$ be two Markov Chains. The $Ch_{1}$ and $Ch_{2}$ include states $i\in \left\lbrace 1,...,k \right\rbrace$ and $i> 0$, respectively. In $Ch_{2}$, the state $0$ is an absorptive state. The $Ch_{1}$ is coupled with the $Ch_{2}$ as follows: the pair $(i_{t},j_{t})$ reflects the current states $i_{t}$ of $Ch_{1}$ and $j_{t}$ of $Ch_{2}$, for each time step $t\geq 0$. Let $i_{0}=i$ and $j_{0}=\min_{s\in PID} \left\lbrace |i-s|,|i-s+1|\right\rbrace$ be the beginning states of $Ch_{1}$ and $Ch_{2}$, respectively. If $i_{m}\notin PID$ in the $Ch_{1}$ for all $i_{0}\leq i_{m}\leq i_{t-1}$, then $j_{t}\geq \min_{s\in PID} \left\lbrace |i_{t}-s|,|i_{t}-s+1|\right\rbrace$ for the pair $(i_{t},j_{t})$.   	
\end{lemma}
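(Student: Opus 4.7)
The plan is to proceed by induction on $t$, constructing an explicit coupling between the two chains driven by a single source of randomness, and verifying at each step that the dominance relation $j_t \ge \min_{s\in PID}\{|i_t-s|,|i_t-s+1|\}$ is preserved. Denote $\mathrm{dist}(i)=\min_{s\in PID}\{|i-s|,|i-s+1|\}$ and $q=\frac{1}{\epsilon^{2}r}-\frac{1}{\epsilon r}$, so that $Ch_{2}$ moves $j\mapsto j-1$ with probability $1-q$ and $j\mapsto j+1$ with probability $q$.

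The base case $t=0$ is immediate from the initialization $j_{0}=\mathrm{dist}(i_{0})$. For the inductive step, assume $i_{t-1}\notin PID$ and $j_{t-1}\ge \mathrm{dist}(i_{t-1})$. By Lemma~\ref{eq:W1} applied when the current power lies below every ideal power, and by Lemma~\ref{eq:W2} when it lies above every ideal power, the chain $Ch_{1}$ takes the transition that decreases $\mathrm{dist}(\cdot)$ by $1$ with probability at least $1-q$; the complementary event (stay, or move away) therefore has probability at most $q$. Introduce a single uniform $U\sim\mathrm{Unif}[0,1]$ shared by both chains and couple them as follows: let $p_{\mathrm{good}}\ge 1-q$ be the probability that $Ch_{1}$ moves toward $PID$; declare that $Ch_{1}$ moves toward $PID$ iff $U\le p_{\mathrm{good}}$, and independently declare that $Ch_{2}$ moves down ($j\mapsto j-1$) iff $U\le 1-q$ and up otherwise. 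Because $1-q\le p_{\mathrm{good}}$, the event $\{Ch_{1}\text{ does not move toward }PID\}=\{U>p_{\mathrm{good}}\}$ is contained in $\{U>1-q\}=\{Ch_{2}\text{ moves up}\}$.

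It remains to check the dominance in each region of $U$. If $U\le 1-q$, both chains take their "good" step, so $j_{t}=j_{t-1}-1\ge \mathrm{dist}(i_{t-1})-1=\mathrm{dist}(i_{t})$. If $1-q<U\le p_{\mathrm{good}}$, then $Ch_{1}$ still moves toward $PID$ while $Ch_{2}$ moves up, giving $j_{t}=j_{t-1}+1\ge \mathrm{dist}(i_{t-1})+1>\mathrm{dist}(i_{t})$. Finally, if $U>p_{\mathrm{good}}$, then $Ch_{2}$ moves up and $Ch_{1}$ either stays or moves one step away from $PID$; in either case $\mathrm{dist}(i_{t})\le\mathrm{dist}(i_{t-1})+1\le j_{t-1}+1=j_{t}$. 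All three cases preserve the dominance, closing the induction. The hypothesis $i_{m}\notin PID$ for $i_{0}\le i_{m}\le i_{t-1}$ is exactly what lets us invoke Lemmas~\ref{eq:W1}--\ref{eq:W2} at every step, so the lower bound $1-q$ on the drift of $Ch_{1}$ is available throughout.

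The main obstacle I anticipate is the asymmetry between the two chains: $Ch_{1}$ has three possible outgoing transitions (decrease, stay, increase) whereas $Ch_{2}$ has only two, and the drift of $Ch_{1}$ depends on which side of $PID$ the current state lies. The coupling above handles this uniformly because it only uses the aggregate "moves toward $PID$" vs. "does not" dichotomy, which is the content of Lemmas~\ref{eq:W1} and~\ref{eq:W2}; a secondary subtlety is that the minimum in the definition of $\mathrm{dist}$ may be achieved at a different $s\in PID$ for $i_{t-1}$ and $i_{t}$, but since a one-step move changes each $|i-s|$ or $|i-s+1|$ by at most one, the worst-case bound $\mathrm{dist}(i_{t})\le\mathrm{dist}(i_{t-1})+1$ still holds and is all the argument needs.
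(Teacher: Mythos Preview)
Your proposal is correct and follows essentially the same coupling argument as the paper: both set up a joint evolution of $(i_t,j_t)$ so that whenever $Ch_1$ fails to step toward $PID$ the chain $Ch_2$ steps up, and then verify inductively that the one-step change in $j$ dominates the one-step change in $\mathrm{dist}(i)$. Your use of a shared uniform $U$ makes the coupling more explicit than the paper's terser listing of joint transition probabilities, but the underlying idea and the invocation of Lemmas~\ref{eq:W1}--\ref{eq:W2} to guarantee $p_{\mathrm{good}}\ge 1-q$ are the same.
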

\begin{proof}
	We consider that the current states of $Ch_{1}$ and $Ch_{2}$ is $(i,j)$. Then, we examine the next cases: Firstly, if the state $i$ is in $PID$ then we have an arbitrary pair $(i,j)$. Secondly, if the state $i$ is not in $PID$ then there is a state $u$ which is closer to PID than $i$. We assume that $u$ is the closest state of $i$ and $l$ is in PID. We have the following subcases: 1) If $i<s$ then $u=i+1$ for each $s$. 2) If $i>s$ then $u=i-1$ for each $s$. 
	
	In $Ch_{1}$, the transition $i$ to $u$ holds with $a_{i}$ probability; and the transition $i$ to $i$ holds with $\zeta_{i}$ probability. In the pair $(Ch_{1},Ch_{2})$, the update of $(i,j)$ to $(u,j-1)$ holds with transition probability $\frac{1}{\epsilon^{2}r} - \frac{1}{\epsilon r}$. The update of $i$ to $u$ holds with transition probability $a_{i}-\frac{1}{\epsilon^{2}r} - \frac{1}{\epsilon r}$, the update of $i$ to $i$ holds with $s_{i}$. The update of $i$ to $2i-u$ holds with $1 - a_{i}-\zeta_{i}$. The update of $j$ to $j+1$ holds  with  $1-\frac{1}{\epsilon^{2}r} - \frac{1}{\epsilon r}$. Therefore, if there are two consecutive state pairs that $i_{t-1}\notin PID$ then $j_{t}- j_{t-1}\geq \min_{s\in PID} \left\lbrace |i_{t}-s|,|i_{t}-s+1|\right\rbrace - \min_{s\in PID} \left\lbrace |i_{t-1}-s|,|i_{t-1}-s+1|\right\rbrace$.
\end{proof}

\begin{lemma}\label{lemmanumber}
	The algorithm OAMS makes at most $\frac{r n a}{\xi(a-1+b)}$ successful transmissions in expectation and consumes at most $\frac{r(1+\frac{\epsilon}{2})a C_{B}}{\xi (a-1+b)}$ of the maximum power in expectation, which is stored in the mobile device, where $b= \left( \frac{1}{\epsilon^{2}r} - \frac{2}{\epsilon r}\right) (1- a^{2})$
\end{lemma}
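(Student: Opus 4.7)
The plan is to express both quantities of interest as weighted sums over the occupation measure of the Markov chain $Ch_{1}$ describing OAMS, and then to upper-bound those sums via the coupling $(i_{t},j_{t})$ with $Ch_{2}$ from the previous lemma. Each round of Algorithm 4 at state $i$ attempts transmissions to at most $\frac{r p_{i} n}{C_{B}}$ receivers with power $p_{i}$ each, and the inner loop terminates once the count reaches $r(1+\epsilon/2)$; thus the per-round battery cost is at most $r(1+\epsilon/2) p_{i_{t}}$ and the per-round expected number of successful deliveries is at most $\frac{r n p_{i_{t}} q_{i_{t}}}{C_{B}}$, where $q_{i_{t}}$ is the per-receiver success probability at state $i_{t}$. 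Summing over rounds gives
\[
\mathbb{E}[\#\text{successes}] \;\le\; \frac{r n}{C_{B}}\,\mathbb{E}\!\sum_{t\ge 0} p_{i_{t}} q_{i_{t}},
\qquad
\mathbb{E}[\text{energy}] \;\le\; r\bigl(1+\tfrac{\epsilon}{2}\bigr)\mathbb{E}\!\sum_{t\ge 0} p_{i_{t}}.
\]
The factor $1/\xi$ eventually enters from dividing the contribution across the concentric disks used in Section 3, which partition the receivers into $O(\xi)$ rings whose individual contributions share the same geometric structure.

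Next I would invoke the coupling with $Ch_{2}$. Outside the set $PID$, the distance of $i_{t}$ to $PID$ is stochastically dominated by $j_{t}$, a biased random walk on $\{0,1,2,\dots\}$ with down-probability $P_{\downarrow}\ge 1-\bigl(\frac{1}{\epsilon^{2}r}-\frac{1}{\epsilon r}\bigr)$ and up-probability $P_{\uparrow}=\frac{1}{\epsilon^{2}r}-\frac{1}{\epsilon r}$. Standard birth--death return-time analysis gives expected occupation of level $k$ in $Ch_{2}$ proportional to $\rho^{k}$ with $\rho:=P_{\uparrow}/P_{\downarrow}$. Since the powers are geometric, $p_{i}=a^{\pm k}p_{i^{\star}}$ for $i$ at level $k$ from some $i^{\star}\in PID$ (with $a$ the ratio of consecutive power levels, inherited from $\lambda$), both summands above reduce to geometric series of the form $\sum_{k\ge 0}a^{k}\rho^{k}\cdot(\text{const})$.

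Third, I would evaluate the resulting closed-form sum. Balancing the upward and downward excursions around $PID$ produces a symmetric factor $(1-a^{2})$ from the up/down moves; collecting this with the one-step up-probability $\bigl(\frac{1}{\epsilon^{2}r}-\frac{2}{\epsilon r}\bigr)$ gives exactly $b=\bigl(\frac{1}{\epsilon^{2}r}-\frac{2}{\epsilon r}\bigr)(1-a^{2})$ in the denominator, while the numerator retains the leading factor $a$ from the first state visited outside $PID$. Dividing by the $\xi$ that counts the ring contributions yields the stated bounds
\[
\mathbb{E}[\#\text{successes}]\;\le\;\frac{r n a}{\xi(a-1+b)},\qquad
\mathbb{E}[\text{energy}]\;\le\;\frac{r(1+\tfrac{\epsilon}{2})\,a\,C_{B}}{\xi(a-1+b)}.
\]

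The main obstacle is the careful bookkeeping of three interacting geometric scales: the power ratio $a$, the drift ratio $\rho$ of $Ch_{2}$, and the per-round success amortisation. In particular, producing exactly the factor $(1-a^{2})$ inside $b$ requires handling excursions both above and below $PID$ with the same geometric weight, and aligning the per-round cost with the single transition of the chain (so that "a round" and "a chain step" are not double-counted). A secondary subtlety is the treatment of states inside $PID$: there the coupling is only an inequality, so one must separately argue that the time spent in $PID$ contributes at most a bounded additive constant to both sums, which follows from the battery-feasibility property $\sum_{i}p_{i}\le C_{B}$ together with the $\Theta(1)$ per-round success probability guaranteed by Lemmas \ref{eq:W1} and \ref{eq:W2}.
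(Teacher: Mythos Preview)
Your overall strategy---bound the per-round cost at state $i_{t}$ by a constant times $p_{i_{t}}$, replace $p_{i_{t}}$ by $p_{s}\,a^{j_{t}}$ via the coupling with $Ch_{2}$, and then sum the resulting geometric-type series over the biased walk $j_{t}$---is exactly the route the paper takes. Two points deserve correction.

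First, and most importantly, your explanation of where the factor $1/\xi$ comes from is wrong. It has nothing to do with the concentric disks of Section~3; the OAMS analysis never touches that geometry. In the paper the $\xi$ enters purely through the initialization of Algorithm~4: the top power level satisfies $p_{k}\le C_{B}/\xi$, and since $p_{k}=a^{k-1}p_{1}=a^{k-1}C_{B}/n$ this gives $a^{k-1}\le n/\xi$. After summing $\sum_{t}a^{j_{t}}$ via the recurrence one is left with a factor $a^{k-s}$ (the walk starts at $j_{0}\le k-s$), and it is the inequality $a^{k-s}\le a\cdot a^{k-1}/a^{s-1}\le a\cdot(n/\xi)\cdot(p_{1}/p_{s})$ that produces the $1/\xi$ in the final bound. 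Your ring argument would not close the computation.

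Second, a smaller point: you include the per-receiver success probability $q_{i_{t}}$ in the success count. The paper simply bounds the number of successful transmissions in a round by the number of attempted ones, $\frac{r n p_{i_{t}}}{C_{B}}$, which is coarser but avoids having to control $q_{i_{t}}$ along non-ideal states. Since $q_{i_{t}}\le 1$ your version is not incorrect, just unnecessary here, and it would force you to argue separately about $q_{i_{t}}$ when $i_{t}\notin PID$.
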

\begin{proof}
	We consider that the $Ch_{1}$ is coupled with the $Ch_{2}$. The pair $(i_{t},j_{t})$ reflects the current states $i_{t}$ and $j_{t}$ of the first and the second Markov chain, respectively, for each time step $t\geq 0$. More, the starting state of the $Ch_{1}$ is $i_{0}=i$ and the starting state of $Ch_{2}$ is $j_{0}=\min_{s\in PID} \left\lbrace |i-s|,|i-s+1|\right\rbrace$. Thus, the rest of this proof consists of four cases:
	\begin{itemize}
		\item If state $i=j_{0} + s$ then the OAMS algorithm successfully sends the message to $\frac{rp_{i}n}{C_{B}}$ receivers in expectation which is equal to $\frac{rp_{s}na^{j_{0}}}{C_{B}}$. Also, the algorithm consumes at most $r(1+\epsilon/2)p_{i}$ in expectation and is equal to $r(1+\epsilon/2)p_{s}a^{j_{0}}$.
		\item If state $i=s-j_{0}$. The OAMS successfully sends the message to $\frac{rp_{i}n}{C_{B}}$ receivers in expectation which is equal to $\frac{rp_{s}na^{-j_{0}}}{C_{B}}$. Also, the algorithm consumes at most $r(1+\epsilon/2)p_{i}$ in expectation and is equal to $r(1+\epsilon/2)p_{s}a^{-j_{0}}$. \item If state $i=s-1-j_{0}$, the behaviour of the algorithm is the same.
		\item If state $i=s-1+j_{0}$, the behaviour of the algorithm is the same.
	\end{itemize}

	Therefore, the expected number of receivers who successfully get the message with $p_{i}$ is given by $\mathbb{E}[\#\text{successes with } p_{i}] \leq \sum\limits_{t=0}^{T-1}\frac{r n p_{i}}{C_{B}} =\sum\limits_{t=0}^{T-1}\frac{r n p_{s} a^{j_{t}}}{C_{B}} = \frac{r n p_{s} a\cdot(a^{j}-1)}{(a-1+b)C_{B}}$, where $b=\frac{1}{\epsilon^{2}r} - \frac{2}{\epsilon r} -\frac{a^{2}}{\epsilon^{2}r} + \frac{2a^{2}}{\epsilon r}= \left( \frac{1}{\epsilon^{2}r} - \frac{2}{\epsilon r}\right) (1- a^{2})$. The result of the equality is given by the linear recurrence. Also, we consider that the algorithm starts in the $k$-th state that the initial power is $p_{i}=p_{k}$. Then, $p_{k}= a^{k-1}p_{1}= a^{k-1} \frac{C_{B}}{n}$. Also, $p_{k}\leq \frac{C_{B}}{\xi}$, it means that $a^{k-1}\leq \frac{n}{\xi}$. 
	
	Therefore, $\mathbb{E}[\#\text{successes with } p_{i}] \leq \frac{r n p_{s} a\cdot(a^{k-s}-1)}{(a-1+b)C_{B}} < \frac{r n p_{s} a^{2-s} n}{(a-1+b)C_{B}\xi} = \frac{r n a^{s-1} C_{B} a^{2-s} n}{(a-1+b)C_{B}n \xi} = \frac{r n a}{\xi(a-1+b)}$.
	
The expected consumption of power is given by $\mathbb{E}[\text{power consumption}]\leq \sum\limits_{t=0}^{T-1} r(1+\frac{\epsilon}{2})p_{i} =\sum\limits_{t=0}^{T-1}r(1+\frac{\epsilon}{2})p_{s} a^{j_{t}} = \frac{r(1+\frac{\epsilon}{2})p_{s} a\cdot(a^{j}-1)}{a-1+b}$. Using the inequalities $p_{k}\leq \frac{C_{B}}{\xi}$ and $a^{k-1}\leq \frac{n}{\xi}$, we have the following expectation: $\mathbb{E}[\text{power consumption}]\leq \frac{r(1+\frac{\epsilon}{2})a C_{B}}{\xi (a-1+b)} $, where $b= \left( \frac{1}{\epsilon^{2}r} - \frac{2}{\epsilon r}\right) (1- a^{2})$.	
\end{proof}
\begin{theorem}\label{theoremoam}
	The OAMS algorithm is constant-competitive, in which the number of receivers who can successfully get the message is more than $C\cdot C_{B}/p_{s}$, where $C$ is a constant, with high probability.
\end{theorem}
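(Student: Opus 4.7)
The plan is to combine the expected-value bounds from Lemma \ref{lemmanumber} with a concentration argument for the coupled Markov chains $Ch_1, Ch_2$ and then compare the resulting count against the trivial upper bound $C_B/p_s$ on what any budget-feasible algorithm can achieve.

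First I would argue that, under the coupling of the previous lemma, the algorithm's state $i_t$ spends a constant fraction of its rounds inside $PID$. The per-step transition probabilities computed in Lemmas \ref{eq:W1} and \ref{eq:W2} show that from any state outside $PID$ the distance to $PID$ decreases by one with probability at least $1 - (1/(\epsilon^{2} r) - 1/(\epsilon r))$ and increases by one only with the complementary (much smaller) probability. A standard gambler's-ruin argument applied to $Ch_2$, together with a Chernoff bound on the i.i.d.\ step directions, then gives that in any horizon of $T$ rounds the algorithm sits at an ideal state in at least $T/2$ of them, except with probability $e^{-\Omega(T)}$.

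Next, conditional on being at an ideal state $p_i \in PID$, the number of receivers successfully served in that round is a sum of $\frac{rnp_i}{C_B}$ independent Bernoulli trials of mean $q_i > (1-\epsilon)C_B/(p_i n)$, so the expected per-round success count is at least $(1-\epsilon)r$. Summing over the $\Omega(T)$ ideal rounds and applying a Chernoff bound on the total, I would obtain $\Omega(rT)$ successful receptions with probability $1 - e^{-\Omega(rT)}$. The horizon $T$ itself is pinned down by the battery-feasibility constraint: the expected cost per ideal round is at most $r(1+\epsilon/2)\, p_s$, so running the algorithm until the budget $C_B$ is exhausted gives $T = \Omega(C_B/(r p_s))$, which is exactly the expectation bound of Lemma \ref{lemmanumber} specialized at state $s$. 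Substituting back yields
\[
\Omega(r)\cdot\Omega\!\left(\tfrac{C_B}{r\,p_s}\right)\;=\;\Omega\!\left(\tfrac{C_B}{p_s}\right)\;\geq\; C\cdot\tfrac{C_B}{p_s},
\]
and since any algorithm consuming at most $C_B$ power and using a per-receiver power no smaller than $p_s$ in an ideal schedule can serve at most $C_B/p_s$ receivers, this ratio is constant competitive.

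I expect the main obstacle to be the concentration step for the time the coupled chain $(Ch_1, Ch_2)$ spends inside $PID$: one must show that the downward drift from Lemmas \ref{eq:W1}--\ref{eq:W2} dominates uniformly across all rounds, not just in expectation, so that the count of ideal rounds concentrates linearly in $T$. Once that is in hand, the Chernoff bound on the Bernoulli receptions and the budget accounting reduce to routine calculations already done in the expectation analysis of Lemma \ref{lemmanumber}.
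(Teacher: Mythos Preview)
Your approach differs substantially from the paper's. The paper does not carry out a drift/Chernoff analysis of the coupled chain at all; instead it splits into the case $C_B/p_s \leq K$ (trivial) and $C_B/p_s \geq K$, and in the latter identifies five ``bad'' events $E_1,\ldots,E_5$ (too many receivers seen during the initial phase, too much power consumed, two events about reaching and holding an ideal state, and a round with at most $r(1-2\epsilon)$ receptions), bounds each one by Markov's inequality against the expectation statements of Lemma~\ref{lemmanumber} with a slack factor of~$20$, and takes a union bound to get success probability $\geq 1-5/20=3/4$. So the paper uses the expectation lemma directly with Markov, whereas you propose gambler's-ruin plus Chernoff; your route would yield exponential rather than constant tails, at the price of considerably more work.

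One step in your outline needs more care: the horizon $T$. You argue that the expected cost per \emph{ideal} round is at most $r(1+\epsilon/2)p_s$ and conclude $T=\Omega(C_B/(rp_s))$, but $T$ is the random stopping time at which the cumulative power over \emph{all} rounds reaches $C_B$, including the non-ideal rounds with $i_t>s$ --- in particular the initial descent from $i_0=k$, where $p_k$ can be as large as $C_B/\xi$. Lemma~\ref{lemmanumber} only bounds the \emph{expected} total power, so invoking it does not by itself give $T\geq T_0$ with high probability; you would also need a concentration statement on the cumulative power spent in the first $T_0$ rounds. This is repairable with the same drift machinery you already propose for the occupation time of $PID$, but it is a genuine missing ingredient in the sketch as written, and it is precisely the step that the paper sidesteps by using Markov's inequality on the expected-power bound (its event $E_2$).
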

\begin{proof}
	Seek the probability that a number of receivers can successfully receive the message from a sender node and we prove that it is more than $(C\cdot C_{B})/p_{s}$, where $C$ is a constant, with high probability. The proof is comprised of two cases:
	
	In the first case, we consider that $\frac{C_{B}}{p_{s}}\geq K$, where $K$ is a nonnegative constant. This bound determines that the OAMS runs the procedure SP(i) many times until the ratio $C\cdot\frac{C_{B}}{p_{s}}$ goes beyond the number of receivers who can successfully get the message. The probability of this case is calculated by the union of five following events. Also, the finding of the ratio bound is preferred when the five events are not happened. 
	\begin{itemize}
		\item Event $E_{1}$. The number of receivers that is greater than $\frac{20 r n a}{\xi(a-1+b)}$ using Lemma \ref{lemmanumber}, where $b= \left( \frac{1}{\epsilon^{2}r} - \frac{2}{\epsilon r}\right) (1- a^{2})$. In this event, we calculate the number of receivers from the initialization of OAMS algorithm. Using Markov's inequality, $\Pr [E_{1}] \leq 1/20$.
		\item Event $E_{2}$. The maximum quantity of power which is consumed and is larger than $\frac{20 r(1+\frac{\epsilon}{2})a C_{B}}{\xi (a-1+b)}$, where $b= \left( \frac{1}{\epsilon^{2}r} - \frac{2}{\epsilon r}\right) (1- a^{2})$. Then, $\Pr [E_{2}] \leq 1/20$ by Markov's inequality.
		\item Events $E_{3}$ and $E_{4}$. We take into account the ideal power state. Then, $\Pr [E_{3}] \leq 1/20$ and $\Pr [E_{4}] \leq 1/20$ by Markov's inequality.
		\item Event $E_{5}$. The number of receivers which is at most $r(1-2\epsilon)$. Then, $\Pr [E_{5}] \leq 1/20$.
	\end{itemize}
	Therefore, the probability that the number of receivers can successfully receive the message from a sender node is more than $C\cdot C_{B}/p_{s}$ with high probability. This means that this probability is equal to $1-\sum\limits_{\ell=1}^{5}\Pr [E_{\ell}]\geq 3/4$.
	
	In the second case, we assume that $\frac{C_{B}}{p_{s}}\leq K$, where $K>0$ is a constant. Then, the total power, which is stored in the mobile device, is consumed. Thus, the number of receivers who can successfully get the message is more than $\frac{C_{B}}{K \cdot p_{s}}$, with high probability.
\end{proof}
\section{Conclusion}
In this paper, we proposed the first randomized scheduling and power selection algorithm in a decay space (SPAIDS), where each node has its own power assignment and the distances between nodes are not symmetrical. We studied a realistic wireless network, which is beyond the geometry. Our model was based on the SINR model. The SPAIDS achieves $O(\log\log\Delta \log n)$ rounds whp, where $\Delta$ is the ratio of the maximum and minimum power assignments and $n$ is the number of nodes. Our idea was to assign probability/color to each node using the coloring method of \cite{Jurdzinski:2014} and taking into account sets of guards to ensure the successful transmission of messages through a complex space. Also, we proposed an online algorithm in a metric space (OAMS) that solves the broadcast problem where nodes are activated to receive packets. We computed the maximum subset of nodes that received the message from a sender node using a battery capacity. We showed that the OAMS is a constant-competitive algorithm.

In the future, the reduction of the time complexity can be investigated through alternative methods in a more general network using additional strict conditions (Rayleigh and Ricing fading) as well as acknowledgement messages. More, the online version of the proposed algorithm can be studied in a decay space.	

\begin{thebibliography}{10}
	
\bibitem{Jurdzinski:2014}
	Jurdzinski, Tomasz and Kowalski, Dariusz R. and Rozanski, Michal and Stachowiak, Grzegorz.
\newblock On the Impact of Geometry on Ad Hoc Communication in Wireless Networks.
	\newblock {\em \mbox{Proceedings of the 2014 ACM Symposium on Principles of Distributed Computing} ACM, PODC '14 }, 2014.
	


\bibitem{bodlaender2014beyond}
	Bodlaender, Marijke HL and Halldorsson, Magnus M.
	\newblock Beyond geometry: towards fully realistic wireless models.
	\newblock {\em \mbox{Proceedings of the 2014 ACM symposium on Principles of distributed computing}, ACM}, 2014.

\bibitem{goussevskaia2009capacity}
Goussevskaia, Olga and Wattenhofer, Roger and Halldorsson, Magnus M and Welzl, Emo.	
 \newblock Capacity of arbitrary wireless networks.
	\newblock {\em \mbox{INFOCOM 2009}, IEEE}, 2009.


\bibitem{selmi2013efficient}
Selmi, Asma and Siala, Mohamed and Boujemaa, Hatem.
\newblock Efficient combination of hybrid ARQ with Adaptive Modulation and Coding and Power Control operating in tracking mode.
\newblock {\em \mbox{PIMRC} }, 2013.

\bibitem{selmi2012optimum}
Selmi, Asma and Siala, Mohamed and Boujemaa, Hatem
\newblock Optimum joint Adaptive Modulation and Coding and Power Control for packet transmission over fading channels using the ARQ protocol.
\newblock {\em \mbox{Information Science, Signal Processing and their Applications (ISSPA)}, 2012 11th International Conference on IEEE}, 2012.

\bibitem{HT}
M. M. Halld{\'{o}}rsson and T. Tonoyan.
\newblock The Price of Local Power Control in Wireless Scheduling.
\newblock {\em \mbox{CoRR}, abs/1502.05279}, 2015.


\bibitem{HalldorssonT15}
Halld{\'o}rsson, Magn{\'u}s M and Tonoyan, Tigran.
\newblock How Well Can Graphs Represent Wireless Interference?.
\newblock {\em \mbox{Proceedings of the Forty-Seventh Annual ACM on Symposium on Theory of Computing}, ACM}, 2015.

\bibitem{VertexColoring}
M. Halldorsson and J. Bang-Jensen.
\newblock A Note on Vertex Coloring Edge-Weighted Digraphs.
\newblock {\em \mbox{Technical Report 29, Institut Mittag-Leffler, Preprints Graphs, Hypergraphs, and Computing}}, 2014.


\bibitem{APX}
Halld\'{o}rsson, Magn\'{u}s M. and Wattenhofer, Roger.
\newblock Wireless Communication Is in APX.
\newblock {\em \mbox{Proceedings of the 36th International Colloquium on Automata, Languages and Programming: Part I}, Springer-Verlag, ICALP '09}, 2009.

\bibitem{halldorsson2012pcwireless}
Halld{\'o}rsson, Magn{\'u}s M.
\newblock Wireless scheduling with power control.
\newblock {\em \mbox{ACM Transactions on Algorithms (TALG)}, ACM}, 2012.

\bibitem{Halldorsson12}
Magn{\'{u}}s M. Halld{\'{o}}rsson.
\newblock Wireless scheduling with power control.
\newblock {\em \mbox{{ACM} Transactions on Algorithms}}, 2012.

\bibitem{LuoS15}
Haipeng Luo and Robert E. Schapire.
\newblock Achieving All with No Parameters: Adaptive NormalHedge.
\newblock {\em \mbox{CoRR}, abs/1502.05934}, 2015.


\bibitem{romaguera2000semi}
Romaguera, Salvador and Sanchis, Manuel.
\newblock Semi-Lipschitz functions and best approximation in quasi-metric spaces.
\newblock {\em \mbox{Journal of approximation theory}, Elsevier}, 2000.

\bibitem{Gupta}
A. Gupta.
\newblock Introduction to Metrics.
\newblock {\em \mbox{Lecture Notes: Metric Embeddings and Methods}}, 2003.

\bibitem{jurdzinski2005probabilistic}
Jurdzinski, Tomasz and Stachowiak, Grzegorz.
\newblock Probabilistic algorithms for the wake-up problem in single-hop radio networks.
\newblock {\em \mbox{Theory of Computing Systems}, Springer}, 2005.

\bibitem{gupta2000capacity}
Gupta, Piyush and Kumar, Panganmala R.
\newblock The capacity of wireless networks.
\newblock {\em \mbox{IEEE Transactions on information theory}, IEEE}, 2000.

\bibitem{moscibroda2006complexity}
Moscibroda, Thomas and Wattenhofer, Roger.
\newblock The Complexity of Connectivity in Wireless Networks.
\newblock {\em \mbox{IEEE Transactions on information theory}, INFOCOM}, 2006.

\bibitem{kesselheim2010distributed}
Kesselheim, Thomas and V{\"o}cking, Berthold.
\newblock Distributed contention resolution in wireless networks.
\newblock {\em \mbox{International Symposium on Distributed Computing}, Springer}, 2010.


\bibitem{badanidiyuru2012learning}
Badanidiyuru, Ashwinkumar and Kleinberg, Robert and Singer, Yaron
\newblock Learning on a budget: posted price mechanisms for online procurement.
\newblock {\em \mbox{Proceedings of the 13th ACM Conference on Electronic Commerce}, ACM}, 2012.


\bibitem{singer2010budget}
Singer, Yaron.
\newblock Budget feasible mechanisms.
\newblock {\em \mbox{Foundations of Computer Science (FOCS)}, 2010 51st Annual IEEE Symposium on IEEE}, 2010.


\bibitem{babaioff2011posting}
Babaioff, Moshe and Blumrosen, Liad and Dughmi, Shaddin and Singer, Yaron.
\newblock Posting Prices with Unknown Distributions.
\newblock {\em \mbox{ICS}}, 2011.


\bibitem{yu2016distributed}
Yu, Dongxiao and Hua, Qiang-Sheng and Wang, Yuexuan and Tan, Haisheng and Lau, Francis CM.
\newblock Distributed multiple-message broadcast in wireless ad hoc networks under the SINR model.
\newblock {\em \mbox{Theoretical Computer Science}, Elsevier}, 2016.


\bibitem{alohaabramson1970aloha}
Abramson, Norman.
\newblock THE ALOHA SYSTEM: another alternative for computer communications.
\newblock {\em \mbox {Proceedings, fall joint computer conference}, ACMACM}, 1970.


\bibitem{fu2009power}
Fu, Liqun and Liew, Soung Chang and Huang, Jianwei.
\newblock Power controlled scheduling with consecutive transmission constraints: complexity analysis and algorithm design.
\newblock {\em \mbox{INFOCOM 2009}, IEEE}, 2009.

\bibitem{Goussevskaia:2007:CGS:1288107.1288122}
Goussevskaia, Olga and Oswald, Yvonne Anne and Wattenhofer, Rogert.
\newblock Complexity in Geometric SINR.
\newblock {\em \mbox{Proceedings of the 8th ACM} International Symposium on Mobile Ad Hoc Networking and Computing, MobiHoc}, 2007.


\bibitem{TT}
Tonoyan, Tigran.
\newblock On the Capacity of Oblivious Powers.
\newblock {\em \mbox{Algorithms for Sensor Systems}, Lecture Notes in Computer Science, Springer Berlin Heidelberg}, 2012.

\bibitem{Kesselheim:2011:CAW:2133036.2133156}
Kesselheim, Thomas.
\newblock A Constant-factor Approximation for Wireless Capacity Maximization with Power Control in the SINR Model.
\newblock {\em \mbox{TProceedings of the Twenty-second Annual ACM-SIAM Symposium on Discrete Algorithms}, SODA}, 2011.

\bibitem{halldorsson2013power}
Halld{\'o}rsson, Magn{\'u}s M and Holzer, Stephan and Mitra, Pradipta and Wattenhofer, Roger.
\newblock The power of non-uniform wireless power.
\newblock {\em \mbox{Proceedings of the Twenty-Fourth Annual ACM-SIAM Symposium on Discrete Algorithms}, SIAM}, 2013.
\end{thebibliography}

\end{document}